\makeatletter
\let\latexaddtocontents\addtocontents 
\makeatother

\documentclass[12pt]{amsart}

\makeatletter
\let\addtocontents\latexaddtocontents 
\makeatother

\pdfoutput=1
\usepackage[utf8]{inputenc}
\usepackage[english]{babel}
\usepackage[T1]{fontenc}
\usepackage[margin=1in]{geometry}
\usepackage{amsmath,amssymb,amsthm,amsfonts,bm,bbm,braket,diagbox,graphicx,tabularx,hhline,longtable,mathtools,mathrsfs,tikz,tikz-cd,upref,xurl,enumitem,physics,multirow,caption,float,lmodern,rotating,ragged2e,adjustbox,xcolor,colortbl,comment,scalerel,fix-cm,etoolbox,subfig,xstring,booktabs,array,fancyhdr,orcidlink}

\usepackage[normalem]{ulem}
\usepackage[
    backend=biber,
    style=numeric-comp,
    sorting=none,
    maxbibnames=2,
    minbibnames=1,
    maxcitenames=2,
    mincitenames=1,
    giveninits=true,
    terseinits=true
]{biblatex}
\AtEveryBibitem{
\clearfield{issn}
\clearfield{doi}
\clearfield{url}
}
\renewbibmacro*{volume+number+eid}{%
\printfield{volume}%
\printfield[parens]{number}%
\printfield{eid}}
\addbibresource{References.bib}

\usepackage{hyperref}
\newtheorem{theorem}{Theorem}
\newtheorem{prop}[theorem]{Proposition}
\newtheorem{cor}[theorem]{Corollary}

\newtheorem*{prop*}{Proposition}
\setcounter{MaxMatrixCols}{20}
\DeclareMathOperator*{\argmax}{arg\,max}
\DeclareMathOperator*{\argmin}{arg\,min}
\newcommand{\UNOmath}{Department of Mathematics, University of New Orleans, New Orleans, LA 70148, USA}


\usepackage{xcolor}

\newcommand{\affmark}[1]{\textsuperscript{#1}}

\begin{document}

\title{No-signalling-projection-invariant Bell inequalities}

\author[S. Patra]{Soumyadip Patra\affmark{1,*}}
\thanks{\textsuperscript{*}Corresponding author: \texttt{spatra@uno.edu}.}
\author[J. Prakash]{Jitendra Prakash\affmark{1,2}}
\author[A. Paudel]{Aaditya Paudel\affmark{1,3}}
\author[P. Bierhorst]{Peter Bierhorst\affmark{1}}

\makeatletter
\global\let\addresses\@empty
\makeatother

\address[1]{\UNOmath}
\address[2]{Jindal Centre for Digital Sciences, O.~P.~Jindal Global University,
Sonipat, Haryana 131001, India}
\address[3]{Department of Mathematics, Miami University, Oxford, OH 45056, USA}

\begin{abstract}
In this paper, we highlight how any Bell inequality for a configuration involving $n$ parties each performing one of $m$ binary-outcome measurements has a canonical form that is no-signalling-projection invariant. Specifically, the $L^2$-projection of weakly signalling data onto the no-signalling polytope leaves the violation of this canonical Bell inequality unchanged. Our methods allow us to derive a general closed formula for the projection and present a substantially more computationally simple procedure for its evaluation. We also show this can be generalised to non-standard projections of potential interest for certain applications. No-signalling projections serve as a preliminary step before undertaking any device-independent application involving Bell experiment data, such as hypothesis testing against local realism, random number generation and entanglement detection.
\end{abstract}

\maketitle

\section{Introduction}
The theoretical treatment of Bell nonlocality and its implications can be approached through a study of sets of conditional distributions $p(\mathbf{a}\lvert\mathbf{x})$ of outcomes given measurement settings, which describe the observable behaviour of a Bell experiment in the asymptotic limit of infinitely many rounds. However, data obtained from real Bell experiments result from finitely many trials and have statistical fluctuations, and consequently, an empirical estimate of $p(\mathbf{a}\lvert\mathbf{x})$ obtained from the data is almost always guaranteed to have \textit{weak} signalling effects, in the sense of violating the \textit{no-signalling} conditions which assert that a party's marginal probability of outcome must be independent of the other party's choice of measurement setting. Table~\ref{tab:Distr_Sig_222} shows an example of an experimental count data from a two-party, two-setting, two-outcome Bell test demonstrating the kind of weak-signalling effects that can appear in experimental data.

In applications of Bell experiments---such as certifiable device-independent quantum randomness~\cite{Acín2016,Pironio2010}, quantum key distribution~\cite{Masanes_2011,Zapatero_2023}, entanglement certification~\cite{Moroder_2013}, and quantification of statistical nonlocality in multipartite scenarios~\cite{PB2024}---it is necessary to obtain a point estimate of the true trial distribution before proceeding with the implementation of the protocol to ensure optimal performance. Deriving the frequency from some preliminary raw count data and using it as a point estimate for a trial distribution without accounting for weak signalling effects can lead to misleading interpretations. For instance, given such an estimate (the empirical distribution) obtained from the unprocessed experimental data, and a Bell inequality describing a facet of the local polytope, one can find a form of the inequality that is satisfied by the behaviour and a form that is violated by it~\cite{Scarani2019} (see Figure 2.2 therein); this is because the empirical distribution is weakly-signalling, and the no-signalling terms used to re-express the inequality no longer vanish, so the resulting ``equivalent'' form can flip from satisfaction to violation.

Instead of using the empirical frequency derived from the count data as a point estimate of the true trial distribution, one should find its ``closest'' approximation satisfying the no-signalling conditions before using it in a protocol which involves the violation of Bell inequalities. This is because in principle, the target behaviour in a Bell experiment obeys the no-signalling constraints. Indeed, implementations that enforce space-like separations of each party's measurement choice from the other party's recorded outcome (closing the locality loophole) preclude superluminal communication, and quantum mechanics respects special relativity and therefore meets the no-signalling requirements. Note that without space-like separations, apparent violations of no-signalling can also arise from experimental sub-optimalities, e.g., slow drifts across sequential fixed-setting blocks, or imperfect randomisation~\cite{Smania25, Bednorz17,Rybotycki25,Liang19,Soltan20}. In such cases, the observed signalling is best interpreted as a deviation from an ideal, drift-free experiment one aims to estimate; the signalling-contaminated empirical distribution is an unphysical estimate of this underlying no-signalling behaviour.

\begin{table}[!htbp]
\caption{\label{tab:Distr_Sig_222}Count data from~\cite{Bierhorst2018} for a $(2,2,2)$ configuration Bell experiment, a bipartite Bell test where each party has a choice between two dichotomic measurements. Each entry $N(ab\lvert xy)$ represents the number of trials yielding outcomes $a$ and $b$ given measurement settings $x$ and $y$. It can be checked that the empirical distribution exhibits weak-signalling effects. For instance, the sum of the entries marked \textsuperscript{\dag} divided by their row sum, $(N(00\lvert 00)+N(01\lvert 00))/\sum_{a,b}N(ab\lvert 00)\approx 4.012\times 10^{-3}$, does not equal the sum of the entries marked \textsuperscript{\ddag} divided by their row sum, $(N(00\lvert 01)+N(01\lvert 01))/\sum_{a,b}N(ab\lvert 01)\approx 3.983\times 10^{-3}$. In other words, the marginal probability of outcome $a=0$ for party $\mathsf A$ varies with the choice of input $y$ of party $\mathsf B$. Ideally, this marginal should be independent of the choice of setting $y$ for $\mathsf B$, as required by the no-signalling condition.}
\centering
\begin{tabular}{rrcccc}
\hline\hline
 &  & \multicolumn{4}{c}{$ab$}\\
\cline{2-6}
 &  & $00$ & $01$ & $10$ & $11$ \\
\cline{2-6}
\multirow{4}{*}{\rotatebox[origin=c]{90}{$xy$}}
 & $00$ & 3166\textsuperscript{\dag} & 1851\textsuperscript{\dag} & 2043 & 1243520 \\
 & $01$ & 3637\textsuperscript{\ddag} & 1338\textsuperscript{\ddag} & 13544 & 1230633 \\
 & $10$ & 3992 & 13752 & 1226 & 1230686 \\
 & $11$ & 357 & 17648 & 16841 & 1215766 \\
\hline\hline
\end{tabular}
\end{table}

Two possible methods to achieve a no-signalling approximation to weakly signalling empirical data are based on the procedure of (a) maximum-likelihood (ML) and (b) $L^{2}$-projection onto the smallest-dimensional affine subspace containing the no-signalling polytope. Given a count dataset, in the ML procedure, one obtains the empirical frequency from the raw data, and then maximises its likelihood with respect to a trial distribution whose corresponding settings-conditional outcome distribution is restricted to satisfy the normalisation and the no-signalling conditions. This preprocessing has been used in the prediction-based ratio approach to hypothesis testing, for instance, in~\cite{PB2024,Bierhorst2018,Zhang20LowLatencyDIQRNG,ZhangGlancyKnill2011,ZhangGlancyKnill2013}, and in device-independent randomness generation via the probability estimation framework~\cite{ZhangKnillBierhorst2018,KZB2020,PatraBierhorst2023,Zhang20QEF,Zhang2021}. While the ML estimator is a statistically natural choice for removing finite statistics fluctuations like those in Table \ref{tab:Distr_Sig_222}, it typically lacks a closed-form expression; in practice one solves an iterative convex program enforcing linear equalities (no-signalling and normalisation) and non-negativity inequalities, which will grow in complexity with the Bell scenario as the number of parties, settings, and outcomes increases. Furthermore, the naturality of the ML procedure is much less clear for removing signalling artifacts caused by experimental sub-optimalities such as slow drifts over the course of an experiment in which fixed measurement settings are implemented for a large block of repeated trials, with different measurement settings for different blocks of trials. An estimation procedure akin to averaging across blocks is more appropriate in this case; projection-based denoising can be used to temper this form of weak signalling (though this should not substitute for mitigating experiment-specific causes, if possible).

This paper focuses on the $L^2$-projection method which yields the nearest no-signalling approximation of an empirical frequency vector $\mathbf f\coloneqq\{f(\mathbf{a}\lvert\mathbf{x})\}$, according to the Euclidean metric, by orthogonally projecting it onto $\mathcal A$, the affine hull of the no-signalling set $\mathcal{P}_{\mathrm{NS}}$ of behaviours; it has been previously explored in~\cite{Bancal_2014, Lin2018}. A procedure for performing this affine projection is to first project $\mathbf f$ onto the kernel of the equality-constraint matrix $A_{\mathrm{eq}}$ for the linear system $A_{\mathrm{eq}}\mathbf{p}=\mathbf{b}$ encoding the equality conditions of no-signalling and normalisation, followed by a translation by a vector belonging to $\mathcal{P}_{\mathrm{NS}}$. We prove in Section~\ref{s:CorrsFornm2BellScenario} (Proposition~\ref{prop:CorrVecBelongsToNullSpace}) with an elementary argument that in the $(n,m,2)$ Bell scenario an $L^2$-projection of an empirical frequency onto $\mathcal A$ preserves the values of full correlator terms as well as a uniformly-averaged form of marginal correlator terms. This extends an implicit observation in Appendix D of~\cite{Bancal_2014} for the basic (2,2,2) case, described more explicitly in Appendix B.3 of~\cite{Lin2018} where the authors indicate the result can be generalised to higher order scenarios employing the machinery of symmetries under local transformations~\cite{Renou_2017,Rosset2020}. In contrast to the technical approach of~\cite{Renou_2017,Rosset2020}, which enlists representation theory to study these symmetry groups, our direct approach here uses elementary methods while illuminating the close relationship of the result to the sparse form of the linear no-signalling and normalisation constraints. The invariance of the uniformly-averaged marginal correlators then allows us to explicitly present a simplified closed-form for the projection, avoiding a direct (more computationally costly) projector construction---i.e., computing a basis $\{\mathbf{b}_i\}_{i=1}^{k}$ of the kernel of $A_{\mathrm{eq}}$ and forming the projector $B(B^{T}B)^{-1}B^T$ with $B=\begin{pmatrix}
    \mathbf{b}_{1} & \mathbf{b}_{2} & \cdots & \mathbf{b}_{k}
\end{pmatrix}$, a route requiring repeated elimination to obtain $B$ and solving with the dense Gram matrix $B^{T}B$, which becomes much more difficult in higher order Bell scenarios. We are also able to generalise the result to weighted $L^2$ projections, which are of interest in removing signalling artifacts in applications where measurement settings are sampled non-uniformly, such as certain protocols for device-independent quantum key distribution~\cite{Acín_2006,Arnon-Friedman2018,Schwonnek2021} and device-independent random number expansion~\cite{Miller2017,Bhavsar_2023,Shalm2021,Liu2021}.

The above findings about the projection method lead to an important consequence that we highlight. Because uniformly-averaged marginal correlators (together with the full correlators) form a canonical spanning family for linear functionals on behaviours, any Bell expression can be rewritten in terms of these correlators. Consequently, Bell inequalities in this canonical correlator form (CHSH included as a special case) are termwise invariant under the projection, and hence so is the entire expression. Therefore, the no-signalling projected estimate attains exactly the same Bell value (and thus the same degree of violation) as that of the raw empirical distribution. The projection-invariant Bell expression can then be taken as its canonical form. Canonical forms for Bell expressions can also be derived using the framework introduced in Section 6.1.1 of \cite{Rosset2020}. However, this approach involves more elaborate machinery: one must first decompose an appropriate linear space associated with the parties into the normalisation, ``allowed'' and ``forbidden'' subspaces, and then project the Bell functional onto the ``allowed'' subspace. Our derivation here highlights the strong connection to correlator form, providing an immediate method for Bell expressions already expressed in terms of correlators, and a simple recipe for general Bell expressions. As every Bell inequality in the $(n,m,2)$ configuration admits such a representation, experimental analyses can utilise this canonical formulation as a starting point that minimises leverage from signalling---whether arising from unavoidable finite-sample fluctuations (even in loophole-free implementations) or from mild, experiment-specific systematics (e.g., drifts, blockwise settings, imperfect randomisation). The canonical form isolates the violation attributable to strictly no-signalling nonlocality, providing a cleaner and more comparable metric of nonlocal behaviour across experiments, and will serve as a useful baseline for choosing among otherwise equivalent representations of Bell expressions across a wide range of applications. 

\section{Projection in the \texorpdfstring{$(n,m,2)$}{nm2} Bell scenario}\label{s:Section2}
\subsection{The affine subspace \texorpdfstring{$\mathcal{A}\supset\mathcal{P}_{\mathrm{NS}}$}{AcontainsN}}
We now present the mathematical setting in which we derive our results. The $(n,m,2)$ Bell scenario is viewed as a system consisting of $n$ non-communicating parties $\mathsf{A}_{1},\mathsf{A}_{2},\ldots,\mathsf{A}_{n}$ that share a quantum-entangled resource (e.g., an $n$-partite quantum state $\ket{\Psi}_{\mathrm{GHZ}}=(\ket{0}^{\otimes n}+\ket{1}^{\otimes n})/\sqrt{2}$). Throughout the paper, for $n\in\mathbb{N}$, we use the shorthand notation $[n]$ for the set $\{1,2,\ldots,n\}$. For $i\in[n]$, party $\mathsf{A}_{i}$ provides an input $x_{i}$ chosen from the $m$-element set $\mathcal{S}\coloneqq\{\mathsf{s}_{1},\mathsf{s}_{2},\ldots,\mathsf{s}_{m}\}$ to her share of the resource and obtains a binary outcome $a_{i}\in\mathcal{O}\coloneqq\{0,1\}$. While larger outcome sets can be considered in more general Bell scenarios, this paper focuses on binary outcomes where they have the most natural definition. An $(n,m,2)$ behaviour $\mathbf{p}$ is fully specified by the conditional probabilities $p(\mathbf{a}\lvert\mathbf{x})$, with $\mathbf{a}=(a_{1},a_{2},\ldots,a_{n})$ and $\mathbf{x}=(x_{1},x_{2},\ldots,x_{n})$: 
\begin{equation}
\mathbf{p}\coloneqq\left\lbrace p(\mathbf{a}\lvert\mathbf{x})\colon(\mathbf{a},\mathbf{x})\in\mathcal{O}^{n}\times\mathcal{S}^{n} \right\rbrace.
\end{equation} 
Geometrically, a behaviour $\mathbf{p}$ can be viewed as a vector belonging to the ambient space $\mathbb{R}^{d}$, with $d=(2m)^n$, whose components are the settings-conditional outcome probabilities $p(\mathbf{a}\lvert\mathbf{x})$. Being a vector of conditional probabilities, a behaviour belongs to the orthant $\mathbb{R}_{+}^d$ of vectors with non-negative entries and satisfies the normalisation conditions:
\begin{equation}\label{eq:Normalisation_nk2}
\sum_{\mathbf{a}}p(\mathbf{a}\lvert\mathbf{x})=1,\,\forall\mathbf{x}\in\mathcal{S}^{n},    
\end{equation} 
If, in addition to the non-negativity and normalisation conditions, a behaviour also satisfies the condition that the probabilities of outcomes observed by any subset of parties are not influenced by the choice of inputs of the parties in the complement set, then we refer to it as an $(n,m,2)$ no-signalling behaviour. This can be expressed algebraically through linear equalities in~\eqref{eq:NoSignm2} which are mathematical statements of the condition that for $i\in[n]$ the input $x_{i}\in\mathcal{S}$ chosen by party $\mathsf{A}_{i}$ does not influence the outcomes of the remaining $n-1$ parties.
\begin{equation}\label{eq:NoSignm2}
p(\mathbf{a}_{\neg i}\lvert\mathbf{x}_{\neg i},x_i=\mathsf{s}_1)=p(\mathbf{a}_{\neg i}\lvert\mathbf{x}_{\neg i},x_i=\mathsf{s}_r), \forall i\in[n],r\in[m]\setminus\{1\},(\mathbf{a}_{\neg i},\mathbf{x}_{\neg i}).
\end{equation}
Here, $\mathbf{a}_{\neg i}$ is the outcome combinations of all parties except party $\mathsf{A}_i$, $\mathbf{x}_{\neg i}$ is the measurement-settings combination for the same parties, and $p(\mathbf{a}_{\neg i}\lvert\mathbf{x}_{\neg i},x_i = \mathsf{s}_r)$ (obtained by summing $p(\mathbf{a}\lvert\mathbf{x}_{\neg i},x_{i}=\mathsf{s}_r)$ over $a_{i}$) is the marginal probability of observing outcomes $\mathbf{a}_{\neg i}$ given measurement settings $\mathbf{x}_{\neg i}$ when party $\mathsf{A}_{i}$ chooses $\mathsf{s}_r$ (i.e., $x_i = \mathsf{s}_r$). We need not include instances such as:
\begin{equation*}
p(\mathbf{a}_{\neg i}\lvert\mathbf{x}_{\neg i},x_i=\mathsf{s}_2)=p(\mathbf{a}_{\neg i}\lvert\mathbf{x}_{\neg i},x_i=\mathsf{s}_r), \forall i\in[n],r\in[m]\setminus\{2\},(\mathbf{a}_{\neg i},\mathbf{x}_{\neg i}),
\end{equation*}
as they are implied by the equality conditions in~\eqref{eq:NoSignm2}. A straightforward counting then reveals that for party $i$ there are $(m-1)(2m)^{n-1}$ such equality conditions leading to a total of $n(m-1)(2m)^{n-1}$ no-signalling conditions (when we consider them for all $i\in[n]$). These conditions completely specify \textit{all possible} no-signalling conditions for the $(n,m,2)$ Bell scenario, in the sense that they can be used to derive the condition that the inputs chosen by any $k$-party subset $\{\mathsf{A}_{i_1},\mathsf{A}_{i_2},\ldots,\mathsf{A}_{I}\}$, where $1\le i_{1}<i_{2}<\cdots<I\le n$ and $k\in[n-1]$, do not influence the outcomes of the remaining $n-k$ parties (constituting the complement of that $k$-party subset)~\cite{Bierhorst2024,Pironio2004PhD}. The $n(m-1)(2m)^{n-1}$ no-signalling conditions in~\eqref{eq:NoSignm2} and the $m^{n}$ normalisation conditions in~\eqref{eq:Normalisation_nk2} together comprise the equality conditions for specifying a no-signalling behaviour and can be represented as $A_{\mathrm{eq}}\mathbf{p}=\mathbf{b}$, where $A_{\mathrm{eq}}\in\mathbb{R}^{t\times d}$ with $t=m^n + n(m-1)(2m)^{n-1}$ and $d=(2m)^n$. Each row of $A_{\mathrm{eq}}$ is a vector $\mathbf{m}_{r}\in\mathbb{R}^d$ whose components are the coefficients of the components of the probability vector $\mathbf{p}$ for an equality condition. The first $n(m-1)(2m)^{n-1}$ entries of the vector $\mathbf{b}\in\mathbb{R}^{t}$ are $0$ (corresponding to no-signalling conditions) and the remaining $m^{n}$ entries are $1$ (corresponding to normalisation conditions). Note some of the no-signalling conditions are redundant as they can be derived using other no-signalling conditions and normalisation conditions,which implies that the rows of $A_{\mathrm{eq}}$ corresponding to those redundant no-signalling conditions are linearly dependent on the other ones. This implies that $\mathrm{rank}(A_{\mathrm{eq}})<t$.

The linear system $A_{\mathrm{eq}}\mathbf{p}=\mathbf{b}$ collects only the linear equalities but imposes no inequalities, hence $\mathcal{A}$ is the affine hull of $\mathcal{P}_{\mathrm{NS}}$. Imposing non-negativity (because the entries of $\mathbf{p}$ are probabilities), one recovers the no-signalling polytope, i.e., $\mathcal{P}_{\mathrm{NS}}=\mathcal{A}\cap\mathbb{R}_{+}^{d}$. In the next section, we compare two approaches to obtaining a no-signalling approximation for a weakly-signalling empirical behaviour $\mathbf{f}\coloneqq\{f(\mathbf{a}\lvert\mathbf{x})\}$: the maximum-likelihood approach, and the $L^2$-projection onto $\mathcal A$.

\subsection{Two approaches to no-signalling approximation: ML and \texorpdfstring{$L^2$}{L2}-projection}\label{s:MLvsL2}

While this paper focusses on the $L^2$-projection onto $\mathcal{A}\supset\mathcal{P}_{\mathrm{NS}}$, we first briefly review the maximum-likelihood method for comparison. This context highlights the $L^2$-projection's key advantages: closed-form affine projection and straightforward geometric diagnostics, making it faster, stabler, and more interpretable for our analysis.

For an $(n,m,2)$ Bell experiment comprising multiple experimental trials, let $N(\mathbf{a},\mathbf{x})$ denote the number of trials yielding outcome combination $\mathbf{a}$ when the settings configuration is $\mathbf{x}$, and let $N(\mathbf x)=\sum_{\mathbf{a}}N(\mathbf{a},\mathbf{x})$ denote the total number of trials implementing settings combination $\mathbf{x}$. We denote the observed settings distribution as $\pi(\mathbf{x})=N(\mathbf{x})/N$, where $N=\sum_{\mathbf{a},\mathbf{x}}N(\mathbf{a},\mathbf{x})$. Given a candidate (true) model $\mathbf{p}\in\mathcal{P}_{\mathrm{NS}}$, the likelihood of the observed data is $\prod_{\mathbf{a},\mathbf{x}}p(\mathbf{a}\lvert\mathbf{x})^{N(\mathbf{a},\mathbf{x})}$; taking the (base-2) logarithm and dividing by $N$ one gets $\frac{1}{N}\log_{2}\Big(\prod_{\mathbf{a},\mathbf{x}}p(\mathbf{a}\lvert\mathbf{x})^{N(\mathbf{a},\mathbf{x})}\Big)$ which is equal to
\begin{equation*}
\sum_{\mathbf{a},\mathbf{x}}\pi(\mathbf{x})f(\mathbf{a}\lvert\mathbf{x})\log_{2}p(\mathbf{a}\lvert\mathbf{x}).
\end{equation*}
Maximising this over the no-signalling set $\mathcal{P}_{\mathrm{NS}}$ one gets the ML no-signalling estimate:
\begin{equation}\label{eq:ML_optimisation}
\widehat{\mathbf p}_{\mathrm{ML}}=\argmax_{\mathbf{p}\in\mathcal{P}_{\mathrm{NS}}}\sum_{\mathbf{a},\mathbf{x}}\pi(\mathbf{x})f(\mathbf{a}\lvert\mathbf{x})\log_{2}p(\mathbf{a}\lvert\mathbf{x}).
\end{equation}
The maximisation in~\eqref{eq:ML_optimisation} can be shown to be equivalent to the minimisation of a settings-weighted Kullback-Leibler (KL) divergence~\cite{Lin2018}:
\begin{equation}\label{eq:ML_optimisation_1}
\widehat{\mathbf{p}}_{\mathrm{ML}} = \argmin_{\mathbf{p}\in\mathcal{P}_{\mathrm{NS}}}\sum_{\mathbf{x}}\pi(\mathbf{x})D_{\mathrm{KL},\mathbf{x}}(\mathbf{f}\,\Vert\,\mathbf{p}),
\end{equation}
where $D_{\mathrm{KL},\mathbf{x}}\big(\mathbf f\,\Vert\,\mathbf p\big)\coloneqq\sum_{\mathbf{a}}f(\mathbf{a}\lvert\mathbf{x})\log_{2}\frac{f(\mathbf{a}\lvert\mathbf{x})}{p(\mathbf{a}\lvert\mathbf{x})}$ is the KL divergence of $\mathbf{f}\coloneqq\{f(\mathbf{a}\lvert\mathbf{x})\}$ with respect to $\mathbf{p}\coloneqq\{p(\mathbf{a}\lvert\mathbf{x})\}$ for a fixed $\mathbf{x}$. This minimisation is also referred to as a reverse $I$-projection (the $m$-projection in information geometry parlance) of $\mathbf f$ onto $\mathcal{P}_{\mathrm{NS}}$~\cite{Csiszar73,CsiszarMatus03,Nielsen2020}. In contrast to the $L^2$-method---which is a closed-form affine (orthogonal) projection onto $\mathcal{A}$---the ML estimator is obtained by solving a convex program: maximise a concave log-likelihood over the convex feasible set $\mathcal{P}_{\mathrm{NS}}$. Although convex, this remains an iterative constrained routine and can become taxing as the number of parties $n$ and the measurement-settings size $\abs{\mathcal{S}}=m$ grow. A further practical difference is that under ML, any instance $f(\mathbf{a}\lvert\mathbf{x})>0$ but $p(\mathbf{a}\lvert\mathbf{x})=0$ (whether forced by the linear equality/inequality constraints or arising at the boundary optimum) contributes $+\infty$ via the term $f\log_{2}(f/p)$, necessitating practical solvers to include smoothing or explicit positivity slacking; the $L^2$-approach, on the other hand, remains finite in such a situation requiring no such smoothing. Finally, while the ML method does literally correspond to the no-signalling distribution \textit{most likely} to have produced the given data under the assumption of a fixed underlying distribution sampled over successive identical experimental trials, in practice this assumption is never perfectly met. In particular, signalling effects often correspond to drifts in the underlying state that are amplified by (non-ideal, but common) experimental setups in which measurement settings are implemented in successive fixed blocks. Here, a distribution describing the experiment as a whole would be better conceived of as an average of the slightly different distributions occurring over the course of data acquisition; projections minimising Euclidean distance of probabilities, in the spirit of least-squares estimation, are a more natural choice in this paradigm. 

Having motivated our consideration of the $L^2$ projection approach, we now formulate it. Given $\mathbf{f}\in\mathbb{R}^d$, an $(n,m,2)$ behaviour that may exhibit small departures from no-signalling, the $L^2$-projection $\widehat{\mathbf{p}}=\Pi_{\mathcal{A}}(\mathbf{f})$ onto $\mathcal{A}=\mathrm{aff}(\mathcal{P}_{\mathrm{NS}})$ is given by
\begin{equation}\label{eq:AffineProj_definition}
\Pi_{\mathcal{A}}(\mathbf{f})\coloneqq\Pi_{\mathrm{ker}}(\mathbf{f}-\mathbf{d})+\mathbf{d},
\end{equation}
where $\Pi_{\mathrm{ker}}$ is the orthogonal projector onto $\mathrm{ker}(A_{\mathrm{eq}})\subset\mathbb{R}^d$ (the direction space parallel to $\mathcal A$), and the displacement vector $\mathbf{d}\in\mathcal{P}_{\mathrm{NS}}$ can be taken to be an (outcome-)uniform behaviour each component of which is defined as $d(\mathbf{a}\lvert\mathbf{x})=1/2^{n}$. The projected behaviour $\widehat{\mathbf{p}}$ satisfies the condition that among all the points in $\mathcal{A}$, it is the one closest in Euclidean distance to $\mathbf f$, i.e.,
\begin{equation}\label{eq:LeastSquares}
\widehat{\mathbf p}=\argmin_{\mathbf{p}\in\mathcal{A}}\norm{\mathbf{f}-\mathbf{p}}_{2}^{2}.
\end{equation}
The above least-squares formulation makes concrete the idea that we ``repair'' $\mathbf f$ by the smallest $L^2$-adjustment that restores the equalities defining the no-signalling conditions.

Expression~\eqref{eq:AffineProj_definition} yields a direct method of constructing the affine projector $\widehat{\mathbf{p}}=\Pi_{\mathcal{A}}(\mathbf{f})$: Find the orthogonal projector $\Pi_{\mathrm{ker}}$ by computing basis vectors of the linear subspace $\mathrm{ker}(A_{\mathrm{eq}})$, form the matrix $B$ with these vectors as columns, and obtain $\Pi_{\mathrm{ker}}$ as $B(B^{T}B)^{-1}B^{T}$; then, the affine projection is given by~\eqref{eq:AffineProj_definition}. Furthermore, the chosen $\mathbf{d}$ in~\eqref{eq:AffineProj_definition} is the minimum-norm point in $\mathcal A$ (the orthogonal representative), implying $\mathbf{d}\perp\mathrm{ker}(A_{\mathrm{eq}})$, i.e., $\Pi_{\mathrm{ker}}\mathbf{d}=\mathbf{0}$, and so~\eqref{eq:AffineProj_definition} simplifies to $\Pi_{\mathrm{ker}}\mathbf{f}+\mathbf{d}$. (Any displacement vector $\mathbf{v}\in\mathcal{A}$ would validate~\eqref{eq:AffineProj_definition}, but only the chosen $\mathbf{d}$ yields the stated simplification.)

The algorithm for computing $\Pi_{\mathcal{A}}(\mathbf{f})$, as just described, will generally be faster to implement than the iterative ML method described earlier. However, this method is potentially still somewhat computationally costly for several reasons: (1) the ambient dimension $d=(2m)^{n}$ grows rapidly for higher Bell scenarios (for even a modest increase in $n$ and $m$); (2) finding basis vectors of $\mathrm{ker}(A_{\mathrm{eq}})$ requires singular value decomposition or iterative null-space methods on a rank-deficient and highly sparse matrix ($A_{\mathrm{eq}}$), which typically results in basis vectors that are highly non-sparse in comparison to the rows of $A_{\mathrm{eq}}$; (3) the projector onto $\mathrm{ker}(A_{\mathrm{eq}})$ involves computing the inverse of the Gram matrix $(B^{T}B)^{-1}$. 

We also note that, in general, the $L^2$-projection need not yield a valid behaviour---the optimiser in~\eqref{eq:LeastSquares} can contain negative probabilities (see (B12) in Appendix B of~\cite{Lin2018} for an example). This pathology, however, only appears when statistical fluctuations are large enough to push $\mathbf f$ significantly outside $\mathcal{A}\cap\mathbb{R}_{+}^d$. In well-calibrated Bell tests operated in the high-count regime, where the empirical frequencies already lie very close to the true underlying no-signalling behaviour, violations like the example of \cite{Lin2018}, with signalling probability differences as large as 40\%, are not observed in practice. If there are no negative probabilities in the estimate obtained from \eqref{eq:LeastSquares}, we can proceed with it; otherwise we enforce non-negativity by replacing the constraint in~\eqref{eq:LeastSquares} with $\mathbf{p}\in\mathcal{A}\cap\mathbb{R}_{+}^d$ which modifies the optimisation to a convex quadratic program. It is still computationally simpler and numerically stabler than the ML routine as the scenario size grows (more number of parties and inputs). In fact, one can arrive at a simple sufficient condition under which non-negativity is nevertheless guaranteed. Let $\mathbf{p}_0\in\mathcal{P}_{\mathrm{NS}}$ be the underlying no-signalling behaviour and suppose $\mathbf{p}_0$ is strictly positive with $\gamma\coloneqq \min_{\mathbf{a},\mathbf{x}}p_0(\mathbf{a}|\mathbf{x})>0$. Writing the empirical behaviour as $\mathbf{f}=\mathbf{p}_0+\boldsymbol{\epsilon}$, linearity of the affine projection $\Pi_{\mathcal A}$ around $\mathbf{p}_0$ gives
\begin{equation*}
\Vert\Pi_{\mathcal A}(\mathbf{f})-\mathbf{p}_0\Vert_2 = \Vert\Pi_{\mathrm{ker}}(\mathbf{f}-\mathbf{p}_0)\Vert_2\le\Vert\mathbf{f}-\mathbf{p}_0\Vert_2=\Vert\boldsymbol{\epsilon}\Vert_2.
\end{equation*}

The first equality holds because $\mathcal{A}=\mathbf{d}+\mathrm{ker}(A_\mathrm{eq})$, and since $\mathbf{p}_0,\mathbf{d}\in\mathcal{A}$, the difference $\mathbf{p}_0-\mathbf{d}$ belongs to $\mathrm{ker}(A_{\mathrm{eq}})$, i.e., $\Pi_{\mathrm{ker}}(\mathbf{p}_0-\mathbf{d})=\mathbf{p}_0-\mathbf{d}$. Using this with a little manipulation one can show that $\Pi_{\mathcal A}(\mathbf{f})-\mathbf{p}_0 = \Pi_{\mathrm{ker}}(\mathbf{f}-\mathbf{d})-(\mathbf{p}_0-\mathbf{d})=\Pi_{\mathrm{ker}}(\mathbf{f}-\mathbf{d}-\mathbf{p}_0+\mathbf{d})$. The inequality then follows from the fact that $\Pi_{\mathrm{ker}}$ is an orthogonal projector, hence it is norm contracting, i.e., $\Vert\Pi_{\mathrm{ker}}(\mathbf{f}-\mathbf{p}_0)\Vert_2\le\Vert\mathbf{f}-\mathbf{p}_0\Vert_2$. Hence, a sufficient condition for positivity is
\begin{equation*}
\Vert\mathbf{f}-\mathbf{p}_0\Vert_2 < \gamma,\,\text{where }\gamma\coloneqq\min_{\mathbf{a},\mathbf{x}}p_{0}(\mathbf{a}|\mathbf{x})>0,
\end{equation*}
using the fact that the $\Vert\cdot\Vert_2$-distance between $\mathbf f$ and $\mathbf{p}_0$ is an upper bound on the difference for any single component of the two vectors. This is a conservative but useful condition. It means if $\mathbf{f}=\mathbf{p}_0+\boldsymbol{\epsilon}$, for some perturbation $\boldsymbol{\epsilon}$, then no negative entry can occur whenever every true probability (i.e., every component $p_0(\mathbf{a}|\mathbf{x})$) of $\mathbf{p}_0$ is separated from zero by more than the empirical $L^2$-error (which is $\Vert\mathbf{f}-\mathbf{p}_0\Vert_2=\Vert\boldsymbol{\epsilon}\Vert_2$), in other words, if the perturbation is small enough. In Section~\ref{ss:FiniteSampleProbBound}, we provide a finite-sample high-probability bound in the spirit of Hoeffding/McDiarmid~\cite{Hoeffding_63,McDiarmid_1989} to the possibility that a projected empirical probability vector has negative coordinates---the relevant details to the results presented therein are included in Appendix~\ref{a:NegativeProbIssue}.

Even after enforcing non-negativity, it is possible the projected behaviour might lie outside the quantum-achievable (convex) set $\mathcal{P}_{\mathrm Q}$, in which case the optimisation in~\eqref{eq:LeastSquares} should be solved with the constraint $\mathbf{p}\in\mathcal{P}_{\mathrm Q}$. Appendix C of~\cite{Lin2018} shows that the unique minimiser of this program is equivalent to first applying the projection onto $\mathcal{A}$ followed by minimising the $L^2$-distance from that projection to a target feasible set (see Lemma 1 therein). More precisely, if one replaces the (intractable) quantum set by any converging NPA-type outer relaxation $\mathcal{P}_{\mathrm{Q}_k}$~\cite{Navascués_2008}, such that $\mathcal{P}_{\mathrm Q}\subset\mathcal{P}_{\mathrm{Q}_k}\subset\mathcal{P}_{\mathrm{NS}}\subset\mathcal{A}$, the least-squares problem reduces to a semi-definite program~\cite{SkrzypczykCavalcanti_23}. A detailed numerical comparison between the projection-based, least-square, and maximum-likelihood point estimators for finite Bell data has been carried out in~\cite{Lin2018}. In particular, their Fig.~3 compares the affine projection method and maximum-likelihood estimators (along with some other variant estimators) for several representative quantum behaviours from the $(2,2,2)$ Bell scenario over $10^4$ simulation runs and sample sizes ranging from $N_{\text{trials}}=10^2$ to $10^{10}$. The reported mean $1$-norm deviation from the generating distribution decreases approximately as $N_{\text{trials}}^{-1/2}$ for all these estimators, including the projection method and the ML-type estimators. Their results also illustrate the expected trade-off: ML estimators produce physical estimates and can be preferable for statistical point estimation, whereas the affine projection is closed-form and computationally inexpensive but may fail to preserve non-negativity in low-statistics or boundary regimes.

The next section presents a closer examination of the equality-constraint matrix $A_{\mathrm{eq}}$. We take advantage of the sparsity of $A_{\mathrm{eq}}$ in deriving results in subsequent sections culminating in a computationally simpler projection realised as the composition of three pre-computable linear maps.

\subsection{Row structure and sparsity of \texorpdfstring{$A_{\mathrm{eq}}$}{Aeq}}
Recalling that $A_{\mathrm{eq}}$ has $d=(2m)^n$ columns, any row vector $\mathbf{r}\in\mathbb{R}^{d}$ of $A_{\mathrm{eq}}$ acts as $\mathbf{r}^T\mathbf{p}=\sum_{(\mathbf{a},\mathbf{x})}r(\mathbf{a},\mathbf{x})p(\mathbf{a}\lvert\mathbf{x})$. Revisiting then the no-signalling conditions in~\eqref{eq:NoSignm2}, for each party $i\in[n]$, each non-reference input $x_{i}\in\mathcal{S}\setminus\{\mathsf{s}_{1}\}$, and each fixed outcome-settings combination $(\tilde{\mathbf{a}}_{\neg i},\tilde{\mathbf{x}}_{\neg i})\in\mathcal{O}
^{n-1}\times\mathcal{S}^{n-1}$, the equality
\begin{equation*}
p(\tilde{\mathbf{a}}_{\neg i}\lvert\tilde{\mathbf{x}}_{\neg i},x_{i}=\mathsf{s}_{1})-p(\tilde{\mathbf{a}}_{\neg i}\lvert\tilde{\mathbf{x}}_{\neg i},x_{i}=\mathsf{s}_{r})=0
\end{equation*}
can be written as $\big(\mathbf{ns}_{(\tilde{\mathbf{a}}_{\neg i},\tilde{\mathbf{x}}_{\neg i})}^{i,r}\big)^T\mathbf{p} = 0$, where the corresponding row is defined componentwise as shown below:
\begin{align}\label{eq:NoSigVec_nm2}
&\text{For any fixed }(\tilde{\mathbf{a}}_{\neg i},\tilde{\mathbf{x}}_{\neg i})\in\mathcal{O}^{n-1}\times\mathcal{S}^{n-1},\nonumber \\
&\mathrm{ns}_{(\tilde{\mathbf{a}}_{\neg i},\tilde{\mathbf{x}}_{\neg i})}^{i,r}(\mathbf{a},\mathbf{x}) \coloneqq \delta_{\mathbf{a}_{\neg i},\tilde{\mathbf{a}}_{\neg i}}\delta_{\mathbf{x}_{\neg i},\tilde{\mathbf{x}}_{\neg i}}\left(\delta_{x_{i},\mathsf{s}_{1}}-\delta_{x_{i},\mathsf{s}_{r}}\right), \forall i\in[n],r\in[m]\setminus\{1\},(\mathbf{a},\mathbf{x}).
\end{align}
Because the outcomes are binary, each such row has exactly four non-zero entries (the coefficients): for the two entries with $(\mathbf{x}_{\neg i},x_i)=(\tilde{\mathbf{x}}_{\neg i},\mathsf{s}_{1})$ and $a_{i}\in\{0,1\}$, the coefficient is $+1$; for the two entries $(\mathbf{x}_{\neg i},x_i)=(\tilde{\mathbf{x}}_{\neg i},\mathsf{s}_{r})$ and $a_{i}\in\{0,1\}$, the coefficient is $-1$. The triple $(i,r,(\tilde{\mathbf{a}}_{\neg i},\tilde{\mathbf{x}}_{\neg i}))$ therefore uniquely labels each no-signalling row. 

Next, for any fixed realisation $\tilde{\mathbf{x}}$ of the settings combination $\mathbf{x}$, the normalisation condition $\sum_{\mathbf{a}}p(\mathbf{a}\lvert\tilde{\mathbf{x}})=1$ can be represented by $\big(\mathbf{nrm}_{\tilde{\mathbf{x}}}\big)^T\mathbf{p}=1$, where the row vector $\mathbf{nrm}_{\tilde{\mathbf{x}}}\in\mathbb{R}^{d}$ is defined componentwise as shown below.
\begin{equation}\label{eq:Normalisation_vector}
\text{For any fixed } \tilde{\mathbf{x}}\in\mathcal{S}^{n}, \mathrm{nrm}_{\tilde{\mathbf{x}}}(\mathbf{a},\mathbf{x})\coloneqq \delta_{\mathbf{x},\tilde{\mathbf{x}}}, \forall(\mathbf{a},\mathbf{x}).
\end{equation}
Thus, $\mathbf{nrm}_{\tilde{\mathbf{x}}}$ has exactly $2^n$ non-zero entries, all equal to $1$, supported on the coordinates $\{(\mathbf{a},\tilde{\mathbf{x}})\colon\mathbf{a}\in\mathcal{O}^n\}$. These $m^n$ normalisation rows are mutually linearly independent.

With a fixed ordering of the rows where we put the $n(m-1)(2m)^{n-1}$ no-signalling rows first and the $m^n$ normalisation rows last, we may write
\begin{equation*}
A_{\mathrm{eq}} = \begin{pmatrix}
A_{\mathrm{ns}} \\ A_{\mathrm{no}}
\end{pmatrix},\, \mathbf{b} = \begin{pmatrix}
\mathbf{0} \\ \mathbf{1}
\end{pmatrix},
\end{equation*}
where $A_{\mathrm{ns}}$ stacks the vectors $\mathbf{ns}_{(\tilde{\mathbf{a}}_{\neg i},\tilde{\mathbf{x}}_{\neg i})}^{i,r}$ and $A_{\mathrm{no}}$ stacks the vectors $\mathbf{nrm}_{\tilde{\mathbf{x}}}$, and $\mathbf{0}$ and $\mathbf{1}$ represent vectors comprising $0$'s and $1$'s, respectively. The sparsity pattern is particularly simple: every no-signalling row has exactly four non-zeros $(+1,+1,-1,-1)$, while every normalisation row has $2^n$ non-zero values that are all $+1$. This sparsity pattern will be useful in showing that the uniformly-averaged marginal correlator coefficient vectors---which we describe in the next section---are orthogonal to every row of $A_{\mathrm{eq}}$, and hence lie in its kernel.

\subsection{Uniformly-averaged marginal correlators}\label{s:CorrsFornm2BellScenario}
For a strictly no-signalling behaviour, i.e., $\mathbf{p}\in\mathcal{P}_{\mathrm{NS}}$, the $k$-party marginal correlator is unambiguous: it is the expectation of the product of the ($\pm 1$-valued) outcomes of the $k$ parties, independent of the settings choices of the remaining $n-k$ parties. For empirical behaviours that may exhibit weak deviation from no-signalling, however, this quantity is not uniquely determined---different ways of conditioning or averaging over the complementary parties' settings choices lead to inequivalent functionals. In this work, we use a canonical, symmetry-respecting choice of functional in $\mathbf{p}$ that (1) is linear in $\mathbf{p}$, (2) is invariant under permutation of the settings choices of the parties outside the $k$-party set, (3) reduces to the standard marginal correlator whenever no-signalling holds, and, notably, (4) is invariant under the $L^2$-projection onto $\mathcal{A}$, as shown later in Proposition~\ref{prop:CorrVecBelongsToNullSpace}. Namely, we take all of the possible different values that the $k$-party-subset correlator can assume -- for each fixed setting choice of the remote, complementary $n-k$ parties -- and uniformly average them; examples of these are used implicitly in Appendix D of~\cite{Bancal_2014} with more explicit discussion in Appendix B.3 of~\cite{Lin2018}. We refer to this as the $k$-party \textit{uniformly-averaged marginal correlator} (UMC) and formalise it below in~\eqref{eq:Marginal_Corr}. The Bell scenario considered here involves \textit{binary} outcomes for all settings and all parties. In general, the machinery can be extended to an arbitrary number of outcomes, though at the cost of more complicated and less natural-looking formulas for correlators. This more general approach is adopted in \cite{Rosset2020}.

Before giving the definition, we first specify the notational conventions underlying a $k$-party UMC. For $k\in[n]$, let $I\coloneqq\{i_1,i_2,\ldots,i_k\}\subseteq[n]$, with $1\le i_1<i_2<\cdots<i_k\le n$, be a $k$-element index set. Denote by $\mathbf{x}_{I}$ the settings choice $(x_i)_{i\in I}$ of the $k$ parties in $I$ and by $\mathbf{x}_{\bar I}$ the settings choice $(x_{j})_{j\in \bar{I}}$ for the remaining $n-k$ parties so that $\mathbf{x}\equiv(\mathbf{x}_{I},\mathbf{x}_{\bar I})$, for $\bar{I}=[n]\setminus I$; likewise for $\mathbf{a}$. The $k$-party UMC is then the linear functional $\bar{C}_{\mathbf{x}_{I}}^{I}\colon\mathbb{R}^{d}\to\mathbb{R}$ defined for all $\mathbf{x}_{I}\in\mathcal{S}^{k}$ as
\begin{equation}
\bar{C}_{\mathbf{x}_{I}}^{I}(\mathbf{p})\coloneqq\frac{1}{m^{n-|I|}}\sum_{\mathbf{x}_{\bar I}}\sum_{\mathbf{a}}\chi_{I}(\mathbf{a})p(\mathbf{a}\lvert\mathbf{x}),\label{eq:Marginal_Corr}
\end{equation}
where $\chi_{I}(\mathbf{a})\coloneqq(-1)^{\oplus_{i\in I}a_i}\in\{+1,-1\}$ is the parity function. The $k$-party UMC in~\eqref{eq:Marginal_Corr} is a linear combination of the components of $\mathbf{p}$. It can therefore be expressed as $\big(\mathbf{c}_{\tilde{\mathbf{u}}_{I}}^I\big)^T\mathbf{p}$, where the correlator coefficient vector $\mathbf{c}_{\tilde{\mathbf{u}}_{I}}^I\in\mathbb{R}^d$, for $d=(2m)^n$, is defined for any fixed settings combination $\tilde{\mathbf{u}}_{I}\in\mathcal{S}^{k}$ of the $k$-party set. The components of $\mathbf{c}_{\tilde{u}_I}^I$ are:
\begin{align}\label{eq:MarginalCorrvec}
c^{I}_{\tilde{\mathbf{u}}_{I}}(\mathbf{a},\mathbf{x}) \coloneqq \frac{1}{m^{n-|I|}}\chi_{I}(\mathbf{a})\delta_{\mathbf{x}_{I},\tilde{\mathbf{u}}_{I}}.
\end{align}

For $I=[n]$, the expression in~\eqref{eq:Marginal_Corr} corresponds to a (standard) full $n$-party correlator:
\begin{equation}\label{eq:Full_Corr}
\bar{C}_{\mathbf{x}}(\mathbf{p}) = \sum_{\mathbf{a}}(-1)^{\bigoplus_{i=1}^{n}a_{i}}p(\mathbf{a}\lvert\mathbf{x}),\,\forall \mathbf{x}\in\mathcal{S}^{n}.
\end{equation}
The expression $\mathbf{c}_{\tilde{\mathbf{u}}}^T\mathbf{p}$ equivalently represents the full correlator in~\eqref{eq:Full_Corr}, where $\mathbf{c}_{\tilde{\mathbf{u}}}$ is defined analogously to~\eqref{eq:MarginalCorrvec} for any fixed settings combination $\tilde{\mathbf{u}}\in\mathcal{S}^{n}$ of the $n$ parties.

Before proceeding, we make note of a representation choice. Following~\cite{Rosset2020}, one can impose an order on $\mathcal{O}^n\times\mathcal{S}^n$ and work in the corresponding Kronecker basis, which is convenient for algebra on local maps (and accommodates our $k$-party UMC). Here we adopt an ordering-free description, where constraints are given by supports and signs only. We remark that while the results of the remainder of this section can emerge from the comprehensive treatment in~\cite{Rosset2020}, they are not presented there in an explicit manner and our direct approach avoids its more technical machinery.

With the uniformly-averaged correlator coefficient vectors and the row structure of $A_{\mathrm{eq}}$ in place, we now state and prove a key proposition that enables a simplified computational method for obtaining the 
$L^2$-projection.

\begin{prop}\label{prop:CorrVecBelongsToNullSpace}
For all $\tilde{\mathbf{u}}_{I}\in\mathcal{S}^{k}$, the UMC coefficient vector $\mathbf{c}^{I}_{\tilde{\mathbf{u}}_{I}}$ satisfies $\mathbf{c}^{I}_{\tilde{\mathbf{u}}_{I}}\in\mathrm{ker}(A_{\mathrm{eq}})$.
\end{prop}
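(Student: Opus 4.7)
The plan is to show $A_{\mathrm{eq}}\mathbf{c}^{I}_{\tilde{\mathbf{u}}_I}=\mathbf{0}$ by establishing orthogonality with every row of $A_{\mathrm{eq}}$, exploiting the sparsity pattern already laid out in Section 2.3. Since $A_{\mathrm{eq}}$ is partitioned into the normalisation block $A_{\mathrm{no}}$ and the no-signalling block $A_{\mathrm{ns}}$, the verification reduces to two independent computations, each of which should collapse after one judicious sum.

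For the normalisation rows, I would compute
\begin{equation*}
\big\langle \mathbf{nrm}_{\tilde{\mathbf{x}}},\,\mathbf{c}^{I}_{\tilde{\mathbf{u}}_I}\big\rangle=\frac{1}{m^{n-|I|}}\sum_{\mathbf{a},\mathbf{x}}\delta_{\mathbf{x},\tilde{\mathbf{x}}}\,\chi_I(\mathbf{a})\,\delta_{\mathbf{x}_I,\tilde{\mathbf{u}}_I},
\end{equation*}
collapse the $\mathbf{x}$-sum via $\delta_{\mathbf{x},\tilde{\mathbf{x}}}$, and observe that the remaining free sum $\sum_{\mathbf{a}}\chi_I(\mathbf{a})$ vanishes because $\chi_I(\mathbf{a})=(-1)^{\oplus_{i\in I}a_i}$ and $|I|\geq 1$, so summing over any single $a_i$ with $i\in I$ already gives $(-1)^0+(-1)^1=0$.

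For the no-signalling rows, the inner product $\big\langle \mathbf{ns}^{i,r}_{(\tilde{\mathbf{a}}_{\neg i},\tilde{\mathbf{x}}_{\neg i})},\,\mathbf{c}^{I}_{\tilde{\mathbf{u}}_I}\big\rangle$ fixes $\mathbf{a}_{\neg i}=\tilde{\mathbf{a}}_{\neg i}$ and $\mathbf{x}_{\neg i}=\tilde{\mathbf{x}}_{\neg i}$, leaving only the free variables $a_i\in\{0,1\}$ and $x_i\in\mathcal{S}$. I then split into two cases according to whether the distinguished party index $i$ belongs to $I$:
\begin{itemize}
\item If $i\in I$, factor $\chi_I(\mathbf{a})=(-1)^{a_i}\,\chi_{I\setminus\{i\}}(\tilde{\mathbf{a}}_{\neg i})$ and use $\sum_{a_i\in\{0,1\}}(-1)^{a_i}=0$; the constraint $\delta_{\mathbf{x}_I,\tilde{\mathbf{u}}_I}$ pins $x_i$ to a single value, so the $a_i$-sum alone kills everything.
\item If $i\notin I$, then both $\chi_I(\mathbf{a})$ and $\delta_{\mathbf{x}_I,\tilde{\mathbf{u}}_I}$ are independent of $(a_i,x_i)$, and the remaining factor $(\delta_{x_i,\mathsf{s}_1}-\delta_{x_i,\mathsf{s}_r})$ sums to zero over $x_i\in\mathcal{S}$.
\end{itemize}

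Neither case involves any serious obstacle; the only thing to watch is the bookkeeping of which Kronecker deltas force which indices, and that the two blocks of rows trigger completely different cancellation mechanisms (a parity cancellation over outcomes for normalisation and the ``$i\in I$'' subcase, versus a telescoping cancellation over settings for the ``$i\notin I$'' subcase). The argument hinges precisely on the fact that the UMC averages \emph{uniformly} over the complementary parties' settings: a non-uniform weighting would spoil the $x_i$ cancellation in Case $i\notin I$, which is the structural reason uniform averaging is singled out in the definition~\eqref{eq:Marginal_Corr}.
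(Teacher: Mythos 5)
Your proposal is correct and follows essentially the same route as the paper's proof in Appendix~\ref{a:ProofProp1}: row-by-row orthogonality against $A_{\mathrm{ns}}$ and $A_{\mathrm{no}}$, with the same case split on $i\in I$ versus $i\notin I$ and the same two cancellation mechanisms (parity cancellation over $a_i$, telescoping cancellation of $\delta_{x_i,\mathsf{s}_1}-\delta_{x_i,\mathsf{s}_r}$ over $x_i$). The only cosmetic difference is that the paper treats the full-correlator case $I=[n]$ separately before the general $|I|<n$ case, whereas you handle all $|I|\ge 1$ uniformly; your closing observation about why uniform averaging is essential is also consistent with the paper's later weighted-$L^2$ generalisation.
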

\begin{proof}
Proof in Appendix~\ref{a:ProofProp1}
\end{proof}
\begin{cor}[Invariance of \texorpdfstring{$k$}{k}-party UMC under \texorpdfstring{$\Pi_\mathcal{A}$}{Pi_A}]\label{cor:InvarianceUnderProjection_nm2}
Given any behaviour $\mathbf p\in\mathbb{R}^{d}$ not necessarily adhering to no-signalling, the $L^2$-projection onto the no-signalling affine hull $\mathcal A$ preserves the value of all $k$-party UMC coefficient vectors, i.e., $\big(\mathbf{c}^{I}_{\tilde{\mathbf{u}}_{I}}\big)^T\mathbf{p}=\big(\mathbf{c}^{I}_{\tilde{\mathbf{u}}_{I}}\big)^T\Pi_{\mathcal A}(\mathbf{p})$.
\end{cor}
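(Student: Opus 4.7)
The plan is to deduce the corollary as a short linear-algebra consequence of Proposition~\ref{prop:CorrVecBelongsToNullSpace} combined with the affine-projection formula~\eqref{eq:AffineProj_definition}. The heart of the matter is the general fact that for any orthogonal projector $\Pi_{\ker}$ onto $\ker(A_{\mathrm{eq}})$, vectors already in $\ker(A_{\mathrm{eq}})$ are fixed: $\Pi_{\ker}\mathbf{c}^{I}_{\tilde{\mathbf{u}}_{I}}=\mathbf{c}^{I}_{\tilde{\mathbf{u}}_{I}}$. Since $\Pi_{\ker}$ is symmetric, this fixed-point property transfers to the dual side when testing inner products with $\mathbf{c}^{I}_{\tilde{\mathbf{u}}_{I}}$.

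Concretely, I would start from the expression
\begin{equation*}
\Pi_{\mathcal{A}}(\mathbf{p})=\Pi_{\ker}(\mathbf{p}-\mathbf{d})+\mathbf{d},
\end{equation*}
take the inner product with $\mathbf{c}^{I}_{\tilde{\mathbf{u}}_{I}}$, and use symmetry of $\Pi_{\ker}$ to move it onto the correlator vector:
\begin{equation*}
\bigl(\mathbf{c}^{I}_{\tilde{\mathbf{u}}_{I}}\bigr)^{T}\Pi_{\mathcal{A}}(\mathbf{p}) = \bigl(\Pi_{\ker}\mathbf{c}^{I}_{\tilde{\mathbf{u}}_{I}}\bigr)^{T}(\mathbf{p}-\mathbf{d}) + \bigl(\mathbf{c}^{I}_{\tilde{\mathbf{u}}_{I}}\bigr)^{T}\mathbf{d}.
\end{equation*}
By Proposition~\ref{prop:CorrVecBelongsToNullSpace}, $\mathbf{c}^{I}_{\tilde{\mathbf{u}}_{I}}\in\ker(A_{\mathrm{eq}})$, so $\Pi_{\ker}\mathbf{c}^{I}_{\tilde{\mathbf{u}}_{I}}=\mathbf{c}^{I}_{\tilde{\mathbf{u}}_{I}}$, and the right-hand side collapses to $(\mathbf{c}^{I}_{\tilde{\mathbf{u}}_{I}})^{T}(\mathbf{p}-\mathbf{d})+(\mathbf{c}^{I}_{\tilde{\mathbf{u}}_{I}})^{T}\mathbf{d}=(\mathbf{c}^{I}_{\tilde{\mathbf{u}}_{I}})^{T}\mathbf{p}$, proving the invariance.

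Equivalently, one can package the argument geometrically: the residual $\mathbf{p}-\Pi_{\mathcal{A}}(\mathbf{p})=(I-\Pi_{\ker})(\mathbf{p}-\mathbf{d})$ lies in $\ker(A_{\mathrm{eq}})^{\perp}$, and by the proposition $\mathbf{c}^{I}_{\tilde{\mathbf{u}}_{I}}$ is orthogonal to every element of that complement, so pairing gives zero. Either phrasing makes clear that the corollary is a one-line consequence of the proposition: there is no real obstacle here, as all the content is absorbed into showing $\mathbf{c}^{I}_{\tilde{\mathbf{u}}_{I}}\in\ker(A_{\mathrm{eq}})$. The only judgement call is whether to present the algebraic (transpose-and-symmetry) form or the geometric (residual in $\ker^{\perp}$) form; I would include both perspectives briefly, since the geometric framing motivates the broader message that any linear functional whose coefficient vector lies in $\ker(A_{\mathrm{eq}})$ is invariant under $\Pi_{\mathcal{A}}$, a fact the subsequent discussion of canonical Bell inequalities will lean on.
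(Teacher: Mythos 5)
Your proposal is correct and follows essentially the same route as the paper's own proof: expand $\Pi_{\mathcal A}(\mathbf p)=\Pi_{\mathrm{ker}}(\mathbf p-\mathbf d)+\mathbf d$, use the symmetry of $\Pi_{\mathrm{ker}}$ to move it onto $\mathbf{c}^{I}_{\tilde{\mathbf{u}}_{I}}$, and invoke Proposition~\ref{prop:CorrVecBelongsToNullSpace} so that $\Pi_{\mathrm{ker}}$ fixes the UMC coefficient vector, collapsing the expression to $(\mathbf{c}^{I}_{\tilde{\mathbf{u}}_{I}})^{T}\mathbf p$. The additional geometric phrasing via the residual lying in $\mathrm{ker}(A_{\mathrm{eq}})^{\perp}$ is a harmless equivalent restatement of the same argument.
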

\begin{proof}
The proof follows straightforwardly from the definition of $\Pi_{\mathcal{A}}$ given in~\eqref{eq:AffineProj_definition}. We fix $I$ and $\tilde{\mathbf{u}}_{I}$, and abbreviate $\mathbf{c}\coloneqq\mathbf{c}_{\tilde{\mathbf{u}}_{I}}^{I}$, since all statements below hold uniformly for any choice of $(I, \tilde{\mathbf{u}}_{I})$.
\begin{align}
\mathbf{c}^T\Pi_{\mathcal A}(\mathbf{p})
=&\mathbf{c}^T\Pi_{\mathrm{ker}}\mathbf{p} - \mathbf{c}^T\Pi_{\mathrm{ker}}\mathbf{d} + \mathbf{c}^T\mathbf{d}
=\big(\Pi_{\mathrm{ker}}\mathbf{c}\big)^T\mathbf{p}-\big(\Pi_{\mathrm{ker}}\mathbf{c}\big)^T\mathbf{d} + \mathbf{c}^T\mathbf{d}\nonumber \\
=&\mathbf{c}^T\mathbf{p} - \mathbf{c}^T\mathbf{d} + \mathbf{c}^T\mathbf{d}
=\mathbf{c}^T\mathbf{p}.\label{eq:InvUnderProj_proof}
\end{align}
The second and third equalities in~\eqref{eq:InvUnderProj_proof} follow, respectively, from the property that as a projection $\Pi_{\mathrm{ker}}^T=\Pi_{\mathrm{ker}}$ (clear from the formula $\Pi_{\mathrm{ker}}=B(B^TB)^{-1}B^T$), and the fact that $\Pi_{\mathrm{ker}}$ acts as an identity on $\mathbf{c}$ (since $\Pi_{\mathrm{ker}}$ is the unique self-adjoint idempotent operator with $\mathrm{range}(\Pi_{\mathrm{ker}})=\mathrm{ker}(A_{\mathrm{eq}})$). Indeed, $\mathbf{c}$ is expressible as $\mathbf{c}=B\mathbf{x}$, a linear combination of the columns of $B$ (which are basis vectors of $\mathrm{ker}(A_{\mathrm{eq}})$). Then using the explicit orthogonal projector $\Pi_{\mathrm{ker}}=B(B^{T}B)^{-1}B^T$ on $\mathbf{c}$ we get $\Pi_{\mathrm{ker}}\mathbf{c}=B(B^{T}B)^{-1}B^{T}B\mathbf{x}=B\mathbf{x}=\mathbf{c}$.
\end{proof}
Although not necessary for constructing an algorithm for the $L^2$ projection, the following results strengthen Proposition~\ref{prop:CorrVecBelongsToNullSpace} by establishing that the UMC coefficient vectors in fact form a \textit{basis} for $\mathrm{ker}(A_{\mathrm{eq}})$.
\begin{cor}[UMC coefficient vectors span $\mathrm{ker}(A_{\mathrm{eq}})$]\label{cor:UMCvecsSpanKernel}
Proposition~\ref{prop:CorrVecBelongsToNullSpace} shows that every uniformly-averaged marginal correlator coefficient vector belongs to $\mathrm{ker}(A_{\mathrm{eq}})$. In fact, these vectors span the entire subspace $\mathrm{ker}(A_{\mathrm{eq}})$, i.e.
\begin{equation}\label{eq:KernelA_equals_SpanUMC}
\mathrm{ker}(A_{\mathrm{eq}}) = \mathrm{span}\{\mathbf{c}_{\mathbf{u}_I}^I\colon\varnothing\neq I\subseteq[n],\mathbf{u}_I\in\mathcal{S}^{|I|}\}.
\end{equation}
Equivalently, every element of $\mathrm{ker}(A_{\mathrm{eq}})$ is expressible as a linear combination of the UMC coefficient vectors.
\end{cor}
\begin{proof}
Since Proposition~\ref{prop:CorrVecBelongsToNullSpace} implies $\mathrm{span}\{\mathbf{c}_{\mathbf{u}_I}^I\colon\varnothing\neq I\subseteq[n],\mathbf{u}_I\in\mathcal{S}^{|I|}\}\subseteq\mathrm{ker}(A_{\mathrm{eq}})$, we just need to show the reverse inclusion. Let $\mathbf{v}\in\mathrm{ker}(A_{\mathrm{eq}})$. We view $\mathbf{v}$ as an array with components $v(\mathbf{a}|\mathbf{x})$, where $\mathbf{a}\in\{0,1\}^n$ and $\mathbf{x}\in\mathcal{S}^n$. For each fixed setting $\mathbf{x}\in\mathcal{S}^n$, the map $\mathbf{a}\mapsto v(\mathbf{a}|\mathbf{x})$ is a vector with components indexed by $\mathbf a$. It can also be viewed as a real-valued function on $\{0,1\}^n$, which can be expanded in the basis:
\begin{equation*}
\{\chi_I\colon\{0,1\}^n\to\mathbb{R}\}_{I\subseteq[n]},\,\text{where }\chi_I(\mathbf a)\coloneq(-1)^{\bigoplus_{i\in I}a_i}.
\end{equation*}
The above set (of parity functions) forms a basis because it satisfies orthogonality with respect to the inner product $\langle f,g\rangle=\sum_{\mathbf a}f(\mathbf a)g(\mathbf a)$. Although we only consider non-empty subsets $I$ of $[n]$ for the basis elements $\chi_I$, for $I=\varnothing$, the function $\chi_I$ maps everything to $1$. Indeed,
\begin{equation*}
\sum_{\mathbf a}\chi_I(\mathbf a)\chi_J(\mathbf a) = \sum_{\mathbf a}\chi_{I\Delta J}(\mathbf a) =  2^n\delta_{I,J},
\end{equation*}
where $I\Delta J$ is the symmetric difference. (If $I\neq J$, then $I\Delta J\neq\varnothing$, and $\chi_{I\Delta J}$ takes $+1$ and $-1$ equally often over $\{0,1\}^n$, so the sum is zero. Since there are $2^n$ subsets $I\subseteq[n]$, and $\mathrm{dim}(\mathbb{R}^{\{0,1\}^n}) = |\{0,1\}^n| = 2^n$, orthogonality implies that these $2^n$ functions form a basis.) Hence, for each fixed setting $\mathbf{x}\in\mathcal{S}^n$, $v(\cdot|\mathbf{x})$ admits the expansion
\begin{equation}\label{eq:Expansion1}
v(\mathbf{a}|\mathbf{x}) = \frac{1}{2^n}\sum_{I\subseteq[n]}\chi_I(\mathbf{a})\widehat{v}_I(\mathbf{x}),
\end{equation}
where $\widehat{v}_I(\mathbf x)\coloneqq\sum_{\mathbf a}\chi_I(\mathbf a)v(\mathbf{a}|\mathbf{x})$. Since $\mathbf{v}\in\mathrm{ker}(A_{\mathrm{eq}})$, it satisfies $A_{\mathrm{eq}}\mathbf{v}=\mathbf{0}$, which---due to the inclusion of the normalisation conditions in $A_{\mathrm{eq}}$---means it satisfies $\sum_{\mathbf a}v(\mathbf{a}|\mathbf{x})=0$, for each $\mathbf x$, from which we conclude that $\widehat{v}_\varnothing(\mathbf x)=0$. Another condition that $\mathbf v$ satisfies is~\eqref{eq:NoSignm2}, which we now use to show that $\widehat{v}_I(\mathbf x)$ only depends on $\mathbf{x}_I$ (and not on the complementary setting combination $\mathbf{x}_{\bar{I}}$) whenever we view $\mathbf{x}$ as $(\mathbf{x}_I,\mathbf{x}_{\bar{I}})$. Fix a non-empty subset $I\subseteq[n]$. Suppose $j\notin I$. For two choices $\mathsf{s}_1,\mathsf{s}_r$
of the $j$'th setting, keeping all other settings fixed, we have
\begin{equation*}
\widehat{v}_I(\mathbf{x}_{\neg j},x_j=\mathsf{s}_1) - \widehat{v}_I(\mathbf{x}_{\neg j},x_j=\mathsf{s}_r) = \sum_{\mathbf a}\chi_I(\mathbf a)[v(\mathbf{a}|\mathbf{x}_{\neg j},x_j=\mathsf{s}_1)-v(\mathbf{a}|\mathbf{x}_{\neg j},x_j=\mathsf{s}_r)].
\end{equation*}
Since $j\notin I$, $\chi_I(\mathbf a)$ is independent of $a_j$. Therefore,
\begin{align*}
\widehat{v}_I(\mathbf{x}_{\neg j},x_j=\mathsf{s}_1) - \widehat{v}_I(\mathbf{x}_{\neg j},x_j=\mathsf{s}_r) = \sum_{\mathbf{a}_{\neg j}}\chi_{I}(\mathbf{a}_{\neg j})\underbrace{\sum_{a_j}[v(\mathbf{a}|\mathbf{x}_{\neg j},x_j=\mathsf{s}_1)-v(\mathbf{a}|\mathbf{x}_{\neg j},x_j=\mathsf{s}_r)]}_{=\,0\text{ (since }\mathbf{v}\text{ satisfies }\eqref{eq:NoSignm2})}.
\end{align*}
Hence, $\widehat{v}_I(\mathbf{x}_{\neg j},x_j=\mathsf{s}_1) = \widehat{v}_I(\mathbf{x}_{\neg j},x_j=\mathsf{s}_r)$. Since this holds for every $j\notin I$, $\widehat{v}_I(\mathbf x)$ is independent of all settings outside $I$. Thus there exists a function $\theta_I\colon\mathcal{S}^{|I|}\to\mathbb{R}$ such that $\widehat{v}_I(\mathbf x)=\theta_I(\mathbf{x}_I)$. Substituting this and the fact that $\widehat{v}_\varnothing(\mathbf x)=0$ into the expression of $v(\mathbf{a}|\mathbf{x})$ in~\eqref{eq:Expansion1}, we get for each $\mathbf x$:
\begin{equation}\label{eq:Expansion2}
v(\mathbf{a}|\mathbf{x})=\frac{1}{2^n}\sum_{\varnothing\neq I\subseteq[n]}\chi_I(\mathbf{a})\theta_{I}(\mathbf{x}_I) = \frac{1}{2^n}\sum_{\varnothing\neq I\subseteq[n]}\sum_{\mathbf{u}_I\in\mathcal{S}^{|I|}}\chi_I(\mathbf{a})\theta_I(\mathbf{u}_I)\delta_{\mathbf{x}_I,\mathbf{u}_I}.
\end{equation}
Using the definition of the UMC coefficient vector in~\eqref{eq:MarginalCorrvec}, we can rewrite~\eqref{eq:Expansion2} as:
\begin{equation}\label{eq:Expansion3}
\mathbf{v} = \sum_{\varnothing\neq I\subseteq[n]}\sum_{\mathbf{u}_I\in\mathcal{S}^{|I|}}\lambda_{I,\mathbf{u}_I}\mathbf{c}_{\mathbf{u}_I}^I,\,\text{where }\lambda_{I,\mathbf{u}_I} = \frac{m^{n-|I|}}{2^n}\theta_I(\mathbf{x}_I).
\end{equation}
This shows that every $\mathbf{v}\in\mathrm{ker}(A_{\mathrm{eq}})$ lies in the span of the UMC coefficient vectors, proving the required inclusion. This establishes the desired equality in~\eqref{eq:KernelA_equals_SpanUMC}.
\end{proof}

The UMC coefficient vectors provide an explicit correlator-based spanning family for all perturbations that preserve the homogeneous normalisation conditions and the no-signalling conditions. It is useful to compare this with the known dimension of the no-signalling polytope. Since $\mathcal{A}=\mathrm{aff}(\mathcal{P}_{\mathrm{NS}})$, and $\mathrm{ker}(A_{\mathrm{eq}})$ is the direction space of $\mathcal A$, we have $\mathrm{dim}(\mathrm{ker}(A_{\mathrm{eq}}))=\mathrm{dim}(\mathcal A)=\mathrm{dim}(\mathcal{P}_{\mathrm{NS}})$. Specialising the standard dimension formula for the no-signalling polytope of any Bell scenario (see Thm 3.1 in~\cite{Pironio2004PhD}) to the $(n,m,2)$ Bell scenario, we get $\mathrm{dim}(\mathcal{P}_{\mathrm{NS}})=(m+1)^n-1$. On the other hand, the number of non-empty coefficient vectors is obtained by choosing a non-empty subset of the $n$ parties, and then choosing one of $m$ settings for each party in the subset. The total number is
\begin{equation}\label{eq:Cardinality_UMCvecs}
|\{\mathbf{c}_{\mathbf{u}_I}^I\colon\varnothing\neq I\subseteq[n],\mathbf{u}_I\in\mathcal{S}^{|I|}\}| = \sum_{k=1}^{n}\binom{n}{k}m^k=(m+1)^n-1
\end{equation}
So the UMC coefficient vectors are not merely a spanning family; they have exactly the same number of elements as $\mathrm{dim}(\mathrm{ker}(A_{\mathrm{eq}}))$. Indeed, they form a basis of $\mathrm{ker}(A_{\mathrm{eq}})$.

\section{\label{s:ClosedFormProj}Closed formula for projection}
We now use the results of the previous section to present a simplified, closed-form approach to constructing the projector onto $\mathcal A$. An important step in the formulation of this projector is a linear (inverse) relation that expresses a no-signalling behaviour in terms of the correlators derived from it. We first derive this inverse relation.

For any subset of parties indexed by $I \subseteq [n]$, we define the correlator $C_{\mathbf{x}_{I}}^{I}$ for a no-signalling behaviour $\mathbf{p}$ as shown below: 
\begin{equation}\label{eq:nm2Corr}
C_{\mathbf{x}_{I}}^{I}\coloneqq \sum_{\mathbf{a}_{I}}\chi_I(\mathbf{a})p(\mathbf{a}_{I}\lvert\mathbf{x}_{I}),
\end{equation}
where $\chi_I(\mathbf{a})=(-1)^{\bigoplus_{l\in I}a_l}$; recall that the vector $\mathbf{x}_{I}$ denotes the vector of the inputs for the $k$-parties, i.e., $\mathbf{x}_{I}\equiv (x_{i_1},x_{i_2},\ldots,x_{i_k})$ such that $\mathbf{x}\equiv(\mathbf{x}_I,\mathbf{x}_{\bar I})$. Due to the no-signalling condition, $p(\mathbf{a}_{I}\lvert\mathbf{x}_{I}) = p(\mathbf{a}_{I}\lvert\mathbf{x}_I,\tilde{\mathbf u}_{\bar I})$
for any arbitrary fixed choice $\tilde{\mathbf u}_{\bar I}$ of the non-$k$-party input $\mathbf{x}_{\bar I}$. And so the expression in~\eqref{eq:nm2Corr} can be equivalently written as follows:
\begin{equation}\label{e:corr}
C_{\mathbf{x}_{I}}^{I} = \sum_{\mathbf{a}}\chi_I(\mathbf{a})p(\mathbf{a}\lvert\mathbf{x}),
\end{equation}
where $\mathbf x$ can be input combination that agrees with the $k$-party input $\mathbf{x}_I$. The summation in~\eqref{e:corr} is over all outcome combination $\mathbf{a}$, in which $\mathbf{a}_I$ is involved in determining the sign in front of $p(\mathbf{a}\lvert\mathbf{x})$ through the parity function, and the rest of the $\mathbf{a}_{\bar I}$ are marginalised over. Following standard expressions for the inverse for probabilities in terms of correlators (as given in expression (11) in~\cite{Goh_2018} and expression (4) in~\cite{Pironio_2011} for the $(2,2,2)$ and $(3,2,2)$ Bell scenarios, respectively), the generalisation to a scenario involving an arbitrary $n$ number of parties, each performing an arbitrary $m$ number of dichotomic measurements, yields the following formula:
\begin{equation}\label{e:prob}
p(\mathbf{a}\lvert\mathbf{x}) = \frac{1}{2^{n}}\sum_{I\subseteq[n]}\chi_I(\mathbf{a})C_{\mathbf{x}_{I}}^{I}.
\end{equation}
Notice that the empty set $I = \varnothing$ is included in the sum, which is performed over \textit{all} subsets of any cardinality. For the correlator corresponding to $I=\varnothing$, we have $\bar{C}^{\varnothing}=1$, since the quantity $\bigoplus_{l\in I}a_l$ is equal to zero for the empty set and so in~\eqref{e:corr} $\bar{C}^{\varnothing}$ is the sum (over all combinations of $\mathbf{a}$) of the conditional probabilities, which equals $1$.

The validity of~\eqref{e:prob} can be seen by plugging~\eqref{e:corr} into the right side of~\eqref{e:prob} after first fixing a choice of outcomes $\mathbf{a}'\equiv(a'_{1},...,a'_{n})$, yielding
\begin{equation}\label{e:proof2}
p(\mathbf{a}'\lvert\mathbf{x})=\frac{1}{2^n}\sum_{I\subseteq [n]} \chi_I(\mathbf{a}')\sum_{\mathbf{a}}\chi_I(\mathbf{a})p(\mathbf{a}\lvert\mathbf{x})=\frac{1}{2^n}\sum_{\mathbf{a}}p(\mathbf{a}\lvert\mathbf{x})\underbrace{\sum_{I\subseteq[n]}\chi_I(\mathbf{a}')\chi_I(\mathbf{a})}_{\coloneqq S(\mathbf{a}',\mathbf{a})}.
\end{equation}
Now there are $2^n$ subsets of $[n]$. We then observe that the inner sum in~\eqref{e:proof2} satisfies
\begin{align*}
S(\mathbf{a}',\mathbf{a}) = \sum_{I\subseteq[n]}(-1)^{\bigoplus_{l\in I}(a'_{l}\oplus a_l)} = \begin{cases}
    2^n, & \mathbf{a}'=\mathbf{a},\\
    0, & \mathbf{a}'\neq\mathbf{a}.
\end{cases}
\end{align*}
The above identity for $S(\mathbf{a}',\mathbf{a})$ can be shown as follows: First, we have
\begin{equation*}
    (-1)^{\bigoplus_{l\in I}(a_{l}'\oplus a_l)}=\prod_{l\in I}(-1)^{a_{l}'\oplus a_l}.
\end{equation*}
Plugging this in the expression for $S(\mathbf{a}',\mathbf{a})$ and then using the fact that expanding $\prod_{l=1}^n(1+a_l)$ for a set of numbers $\{a_1,...,a_n\}$ yields precisely the sum of all terms of the form $\prod_{l\in I}a_l$ for each subset $I \subseteq [n]$, we have
\begin{equation}\label{eq:S_identity_proof}
\sum_{I\subseteq[n]}(-1)^{\bigoplus_{l\in I}(a_{l}'\oplus a_l)}= \sum_{I\subseteq[n]}\prod_{l\in I}(-1)^{a_{l}'\oplus a_{l}}=\prod_{t=1}^{n}(1+(-1)^{a_{t}'\oplus a_{t}})    
\end{equation}
which establishes the required identity for $S(\mathbf{a}',\mathbf{a})$ once we observe that for even a single mismatch, i.e., $a_{t}'\neq a_t$, the product following the second equality in~\eqref{eq:S_identity_proof} is zero, and when $a_{t}'=a_t$ for all $t$, it is $2^n$.

Having derived the linear inverse in~\eqref{e:prob} expressing $p(\mathbf{a}\lvert\mathbf{x})$ in terms of its correlators, the $L^2$-projector sending any weakly-signalling behaviour $\mathbf f$ to its projection onto $\mathcal A$ is as shown below:
\begin{equation}\label{eq:ProjectionFormula}
\widehat{p}(\mathbf{a}\lvert\mathbf{x})
=\frac{1}{2^n}\sum_{I\subseteq[n]}\chi_{I}(\mathbf{a})\underbrace{\frac{1}{m^{n-|I|}}\sum_{\mathbf{x}_{\bar{I}},\mathbf{a}'}\chi_{I}(\mathbf{a}')f(\mathbf{a}'\lvert\mathbf{x})}_{\text{UMC (as in \eqref{eq:Marginal_Corr})}}.
\end{equation}
The formula in~\eqref{eq:ProjectionFormula} gives the projection because for an empirical behaviour satisfying the no-signalling conditions (and hence belonging to $\mathcal A$) the UMC is equal to the correlators defined in~\eqref{e:corr}, i.e.,
\begin{equation*}
\frac{1}{m^{n-\abs{I}}}\sum_{\mathbf{x}_{\bar I},\mathbf{a}}\chi_I(\mathbf{a})f(\mathbf{a}\lvert\mathbf{x}) = \sum_{\mathbf{a}}\chi_I(\mathbf{a})f(\mathbf{a}\lvert\mathbf{x}),
\end{equation*}
whenever $f(\mathbf{a}\lvert\mathbf{x})$ is strictly no-signalling, and the UMCs of $f$ and $\widehat p$ are the same by Corollary~\ref{cor:InvarianceUnderProjection_nm2}. 

The following commutative diagram illustrates the two approaches to performing the projection: a direct approach as formulated in~\eqref{eq:AffineProj_definition}, and the computationally simpler approach corresponding to the closed formula~\eqref{eq:ProjectionFormula} which can be decomposed into a sequence of three simple linear maps $T_3 T_2 T_1$.
\[ 
\begin{tikzcd}[scale=1.6, transform shape, row sep=4.1em, column sep=5.1em, every arrow/.append style = {-{Stealth[length=8pt]}, line width=1.05pt}]
\text{\large{$\mathbf{f}$}} \arrow[r, "T_1"] \arrow[d, "\Pi_{\mathcal A}"] & \text{\large{$C_{\mathbf x}^I$}} \arrow[d, "T_2"] \\%
\text{\large{$\widehat{\mathbf{p}}$}} & \text{\large{$\bar{C}_{\mathbf{x}_I}^I$}} \arrow[l, "T_3"]
\end{tikzcd}
\]
The three maps are as described below.

(1) \textit{Mapping $\mathbf f$ to $C_{\mathbf{x}}^I$}: Given a weakly-signalling empirical behaviour $\mathbf f$, the first summation in~\eqref{eq:Marginal_Corr} (which is over all combinations of $\mathbf{a}$) is:
\begin{equation}\label{eq:f_to_Cx}
C_{\mathbf{x}}^I = \sum_{\mathbf a}\chi_I(\mathbf{a})f(\mathbf{a}\lvert\mathbf{x}).
\end{equation}
Notice that the expression in~\eqref{eq:f_to_Cx} is the same as that in~\eqref{e:corr}: it is the sum of the (settings-conditional) probabilities of all outcomes, weighted by ``$+1$'' if the outcomes of the $I$-parties have even parity and by ``$-1$'' if they have odd parity. However, now $\mathbf f$ is weakly-signalling, and so~\eqref{eq:f_to_Cx} is not independent of the non-$I$-parties' input $\mathbf{x}_{\bar I}$, whereas, in~\eqref{e:corr}, the behaviour is no-signalling and so the expression is independent of $\mathbf{x}_{\bar I}$ and in all cases equal to the $C^I_{\mathbf{x}_I}$ quantity in \eqref{eq:nm2Corr}. We distinguish~\eqref{eq:f_to_Cx} notationally by using $C_{\mathbf x}^I$. The mapping corresponding to \eqref{eq:f_to_Cx} can be represented as $\mathbf{t}_1 = T_1\mathbf{f}$, where $T_{1}\in\mathbb{R}^{d\times d}$ for $d=(2m)^n$, and $\mathbf{t}_1$ collects the $C_{\mathbf{x}}^I$'s.

(2) \textit{Mapping $C_{\mathbf x}^I$ to $\bar{C}_{\mathbf{x}_I}^I$}: The second summation in~\eqref{eq:Marginal_Corr} uniformly averages $C_{\mathbf x}^I$ over all the input combinations $\mathbf{x}_{\bar I}$ of the parties not involved in $I$: 
\begin{equation}\label{eq:Cx_to_CxI}
\bar{C}_{\mathbf{x}_I}^I = \frac{1}{m^{n-\abs{I}}}\sum_{\mathbf{x}_{\bar{I}}}C_{\mathbf{x}}^I,
\end{equation}
which can be represented as $\mathbf{t}_2 = T_2\mathbf{t}_1$, where $T_2\in\mathbb{R}^{(m+1)^n\times d}$, and $\mathbf{t}_2$ collects the $\bar{C}_{\mathbf{x}_I}^I$'s.

(3) \textit{Mapping $\bar{C}_{\mathbf{x}_I}^I$ to $\widehat{\mathbf{p}}$}: Finally we map the UMC $\bar{C}_{\mathbf{x}_I}^I$ to the component probabilities of the projection $\widehat{\mathbf p}$ in $\mathcal A$.
\begin{equation}\label{eq:CxI_to_phat}
\widehat{p}(\mathbf{a}\lvert\mathbf{x}) = \frac{1}{2^n}\sum_{I\subseteq[n]}\chi_I(\mathbf{a})\bar{C}_{\mathbf{x}_I}^{I},
\end{equation}
which can be represented as $\widehat{\mathbf{p}} = T_3\mathbf{t}_2$, where $T_3\in\mathbb{R}^{d\times(m+1)^n}$. And so once we fix an ordering for the entries of $\mathbf{f},\mathbf{t}_1,\mathbf{t}_2,\widehat{\mathbf{p}}$, we can express the projection as $\widehat{\mathbf{p}}=T_3 T_2 T_1\mathbf{f}$, where the representation of the maps $T_i$ will depend on the ordering we choose to fix. An illustration of the maps is given in Appendix B for the $(2,2,2)$ Bell scenario; the explicit matrix forms there make the simplicity of the maps and their ease of implementation clear.

The diagram showing the two approaches for projection commutes because by Corollary~\ref{cor:InvarianceUnderProjection_nm2}, $\bar{C}_{\mathbf{x}_I}^I$ can be seen as either associated with the empirical behaviour $\mathbf f$ (from which it can be obtained after applying $T_1$ and $T_2$) or with its projected (no-signalling) counterpart $\widehat{\mathbf{p}}$ (from which it can be obtained using~\eqref{eq:nm2Corr}).

\subsection{Finite-sample probability for negative projected entries}\label{ss:FiniteSampleProbBound}
Although the affine projection $\Pi_{\mathcal A}$ restores the no-signalling and normalisation equalities, it does not explicitly impose the non-negativity constraints. We now give a finite-sample bound on the probability that a projected coordinate becomes negative, as a function of the smallest probability in the underlying behaviour. To that end,  we first re-express the projection formula given in~\eqref{eq:ProjectionFormula} in a form that will be useful for deriving probability bounds, as follows:
\begin{align}\label{eq:Re-epressProjFormula}
\widehat{p}(\mathbf{a}|\mathbf{x}) &= \sum_{\mathbf{a}',\mathbf{x}'}H_{\mathbf{a},\mathbf{x}}(\mathbf{a}',\mathbf{x}')f(\mathbf{a}',\mathbf{x}'),\\
\text{where }H_{\mathbf{a},\mathbf{x}}(\mathbf{a}',\mathbf{x}') &\coloneqq \frac{1}{(2m)^n}\prod_{i\in K(\mathbf{x},\mathbf{x}')}(1+m(-1)^{a_i\oplus a'_i}),\nonumber\\
\text{for }K(\mathbf{x},\mathbf{x}')&\coloneqq\{i\in[n]\colon x_i=x'_i\}.\nonumber
\end{align}
The details needed to obtain the above re-expression of~\eqref{eq:ProjectionFormula} are worked out in Appendix~\ref{a:NegativeProbIssue}.

Let $\mathbf{p}_0\in\mathcal{P}_{\mathrm{NS}}$ be the true behaviour, and $N_{\mathbf x}$ denote the independent trials corresponding to each setting combination $\mathbf x\in\mathcal{S}^n$. Since $\mathbf{p}_0\in\mathcal{A}$, the affine projection is unbiased:
\begin{equation*}
\mathbb{E}[\widehat{p}(\mathbf{a}|\mathbf{x})]=\mathbb{E}[\Pi_{\mathcal{A}}(f)(\mathbf{a}|\mathbf{x})] = \Pi_{\mathcal{A}}(\mathbb{E}[f(\mathbf{a}|\mathbf{x})])=\Pi_{\mathcal{A}}(p_0(\mathbf{a}|\mathbf{x})=p_0(\mathbf{a}|\mathbf{x})).
\end{equation*}
The justification for the penultimate equality is as follows. Suppose that for every setting $\mathbf x$ we perform $N_{\mathbf x}$ trials. For the $i$'th trial with setting $\mathbf x$, let $A_{\mathbf{x},i}\in\{0,1\}^n$ be the observed outcome string. We assume $A_{\mathbf{x},i}\sim p_{0}(\cdot|\mathbf{x})$, independently across trials. Then the empirical frequency can be written as $f(\mathbf{a}|\mathbf{x})=\frac{1}{N_{\mathbf x}}\sum_{i=1}^{N_{\mathbf x}}[\![A_{\mathbf{x},i}=\mathbf{a}]\!]$, where the function $[\![X]\!]$ evaluates to $1$ if the condition $X$ holds, and $0$ otherwise. This $f(\mathbf{a}|\mathbf{x})$ is a random variable because $A_{\mathbf{x},i}$ is random. Its expectation is
\begin{equation*}
\mathbb{E}[f(\mathbf{a}|\mathbf{x})] = \frac{1}{N_{\mathbf x}}\sum_{i=1}^{N_{\mathbf x}}\mathbb{E}[[\![A_{\mathbf{x},i}=\mathbf{a}]\!]]=\frac{1}{N_{\mathbf x}}\sum_{i=1}^{N_{\mathbf{x}}}\mathbb{P}[A_{\mathbf{x},i}=\mathbf{a}]= p_0(\mathbf{a}|\mathbf{x}).
\end{equation*}
So the empirical frequency is an unbiased estimator of the true probability. An important remark is that the statement $\mathbb{E}[f(\mathbf{a}|\mathbf{x})]=p_0(\mathbf{a}|\mathbf{x})$ does not mean that for one finite experiment $\widehat{p}(\mathbf{a}|\mathbf{x})=p_0(\mathbf{a}|\mathbf{x})$. Nonetheless, it does imply, via the law of large numbers, that if we repeat the whole finite-sample experiment many times, compute $\mathbf{f}$, project it to $\widehat{\mathbf{p}}$, and average the resulting $\widehat{p}(\mathbf{a}|\mathbf{x})$ values, the average will approach $p_0(\mathbf{a}|\mathbf{x})$. So $\widehat{p}(\mathbf{a}|\mathbf{x})$ fluctuates around $p_0(\mathbf{a}|\mathbf{x})$. The problem, then, is to bound the probability that the fluctuation dips below zero.

We now present the probability bound using Hoeffding/McDiarmid inequality~\cite{Hoeffding_63,McDiarmid_1989}. The following results are stated here in their final form; full derivations and implementation details can be found in Appendix~\ref{a:NegativeProbIssue}.

First, for a fixed $(\mathbf{a},\mathbf{x})$, we define the following quantity:
\begin{equation}
\Delta_{\mathbf{a},\mathbf{x}}(\mathbf{x}')\coloneqq \max_{\mathbf{a}'}H_{\mathbf{a},\mathbf{x}}(\mathbf{a}',\mathbf{x}') - \min_{\mathbf{a}'}H_{\mathbf{a},\mathbf{x}}(\mathbf{a}',\mathbf{x}'), 
\end{equation}
where $H_{\mathbf{a},\mathbf{x}}(\mathbf{a}',\mathbf{x}')$ is defined in~\eqref{eq:Re-epressProjFormula}. If $K(\mathbf{x},\mathbf{x}')\coloneqq\{i\colon x_i=x'_i\}$, and $k(\mathbf{x},\mathbf{x}')=|K(\mathbf{x},\mathbf{x}')|$, then as shown in Appendix~\ref{a:NegativeProbIssue}, we can write the following expression for $\Delta_{\mathbf{a},\mathbf{x}}(\mathbf{x}')$:
\begin{align}\label{eq:Delta_def}
\Delta_{\mathbf{a},\mathbf{x}}(\mathbf{x}') = \begin{cases}
    0, &k(\mathbf{x},\mathbf{x}')=0,\\
    \frac{2m(m+1)^{k(\mathbf{x},\mathbf{x}')-1}}{(2m)^n}, &k(\mathbf{x},\mathbf{x}')\ge 1.
\end{cases}
\end{align}
An application of Hoeffding/McDiarmid inequality for a fixed choice of $\mathbf a$ and $\mathbf x$ gives:
\begin{equation}\label{eq:McDiarmidIneq_SS3.1}
\mathbb{P}[\widehat{p}(\mathbf{a}|\mathbf{x})<0]=\mathbb{P}[\Pi_{\mathcal A}(f)(\mathbf{a}|\mathbf{x})<0] \le \exp\left(-\frac{2p_0(\mathbf{a}|\mathbf{x})^2}{\sum_{\mathbf{x}'}\Delta_{\mathbf{a},\mathbf{x}}(\mathbf{x}')^2/N_{\mathbf{x}'}}\right).
\end{equation}
Consequently, we can arrive at a union bound once we define the ``bad'' event coordinate wise as $\{\widehat{p}(\mathbf{a}|\mathbf{x})<0\}$. Then the event that the projected behaviour has at least one negative probability (coordinate) is $\{\widehat{\mathbf p}\notin\mathbb{R}_{+}^d\}$, which is equivalent to $\bigcup_{\mathbf{a},\mathbf{x}}\{\widehat{p}(\mathbf{a}|\mathbf{x})<0\}$. Therefore,
\begin{equation}\label{eq:ProbBound_UnbalancedCase}
\mathbb{P}[\widehat{\mathbf p}\notin\mathbb{R}_{+}^d] \le \sum_{\mathbf{a},\mathbf{x}}\exp\left(-\frac{2p_0(\mathbf{a}|\mathbf{x})^2}{\sum_{\mathbf{x}'}\Delta_{\mathbf{a},\mathbf{x}}(\mathbf{x}')^2/N_{\mathbf{x}'}}\right).
\end{equation}
Now defining $\gamma\coloneqq\min_{\mathbf{a},\mathbf{x}}p_0(\mathbf{a}|\mathbf{x})$, and assuming a balanced case where for each $\mathbf{x}$ we have $N_{\mathbf x}=N$,~\eqref{eq:ProbBound_UnbalancedCase} takes the form:
\begin{equation}\label{eq:ProbBound_BalancedCase}
\mathbb{P}[\widehat{\mathbf p}\notin\mathbb{R}_{+}^d] \le (2m)^n\exp\left(-\frac{2N\gamma^2}{C_{n,m}}\right),
\end{equation}
where $C_{n,m}$ is given by the formula $C_{n,m}=\frac{4m^2}{(2m)^{2n}(m+1)^2}[m^n(m+3)^n - (m-1)^n]$, derived in Appendix~\ref{a:NegativeProbIssue}. Applying this bound to the $(2,2,2)$ Bell scenario gives us $\mathbb{P}[\widehat{\mathbf p}\notin\mathbb{R}_{+}^d]\le 16\exp(-32N\gamma^2/11)$. While this bound can certainly be improved (since, for instance, the union bound ignores dependencies between the estimators of the coordinate-wise probabilities), it would be sufficient for many (though certainly not all) Bell experiment implementations to rule out the issue. For example, in a (2,2,2) Bell experiment with all underlying outcome probabilities at least 0.05---a reasonable assumption for an experiment aiming to observe CHSH values close to the Tsirelson bound---the bound tells us the probability of observing negative probabilities is already near-negligible ($<3\times10^{-4}$) with $N=1,500$ trials for each setting.

\section{Canonical Bell expressions}\label{s:CanonicalBellExpressions}
The kernel membership of the $k$-party UMC coefficient vectors shows that $L^2$-projection onto $\mathcal A$ preserves all correlator values. This leads to a class of robust Bell inequalities for the $(n,m,2)$ configuration whose evaluation is stable under projection of empirical behaviours onto the no-signalling affine hull. Concretely, given a Bell expression
\begin{equation}\label{e:bellfn}
\sum_{\mathbf a, \mathbf x} \gamma_{\mathbf a, \mathbf x} p(\mathbf a\lvert\mathbf x)
\end{equation}
(that is, a linear combination of components of $\mathbf p$ with fixed coefficients $\gamma_{\mathbf a, \mathbf x}$), there exist different coefficients $\{\beta_{I,\mathbf{x}_{I}}\}$ such that for $C_{\mathbf{x}_{I}}^{I}(\mathbf{p})$ as defined in~\eqref{eq:Marginal_Corr}, we have $B(\Pi_{\mathcal A}(\mathbf{p}))=B(\mathbf{p})$ for
\begin{equation}
B(\mathbf{p}) = \sum_{\mathbf{x}_{I}\colon I \subseteq [n]}\beta_{I,\mathbf{x}_{I}}\bar{C}_{\mathbf{x}_{I}}^{I}(\mathbf{p}).
\end{equation}
$B(\mathbf p)$ is equivalent to the original Bell expression in \eqref{e:bellfn} in the sense that the $B(\mathbf p) = \sum_{\mathbf a, \mathbf x} \gamma_{\mathbf a, \mathbf x} p(\mathbf a\lvert\mathbf x)$ for any no-signalling behaviour $\mathbf{p}$ (the expressions may differ when applied to a signalling behaviour). Thus, when this form of the Bell function is employed, the equivalence of pre- and post-projection Bell values mean that weakly-signalling artifacts from finite data do not affect the perceived nonlocality as indicated by a Bell inequality violation. This projection-invariant correlator expansion can thereby be taken as a canonical form of the Bell inequality. 

There are several Bell inequalities from different Bell scenarios that are classified as ``full correlator'' inequalities; two such examples are the so-called Clauser-Horne-Shimony-Holt (CHSH) inequality for the $(2,2,2)$ Bell scenario, and the Mermin inequality~\cite{Mermin_90} for the $(3,2,2)$ Bell scenario. These inequalities are already in the canonical form since the UMC terms in Section~\ref{s:CorrsFornm2BellScenario} for $I=[n]$ correspond to the standard definition of full correlators. The CHSH and Mermin inequalities are shown in~\eqref{eq:CHSH} and~\eqref{eq:Mermin}, respectively.
\begin{align}
&C_{00}^{\{1,2\}} + C_{01}^{\{1,2\}} + C_{10}^{\{1,2\}} - C_{11}^{\{1,2\}} \le 2,\label{eq:CHSH}\\
&C_{001}^{\{1,2,3\}} + C_{010}^{\{1,2,3\}} + C_{100}^{\{1,2,3\}} - C_{111}^{\{1,2,3\}} \le 2.\label{eq:Mermin}
\end{align}
The full correlators for the two inequalities are defined for $a,b,c,x,y,z\in\{0,1\}$ as:
\begin{align}
&C_{xy}^{\{1,2\}}\coloneqq \sum_{a,b}(-1)^{a\oplus b}p(ab\lvert xy),\label{eq:FullCorrelator_CHSH}\\
&C_{xyz}^{\{1,2,3\}}\coloneqq\sum_{a,b,c}(-1)^{a\oplus b\oplus c}p(abc\lvert xyz).\label{eq:FullCorrelator_Mermin}
\end{align}

The canonicalisation step is non-trivial precisely when marginal correlator terms appear in the Bell inequality; full-correlator inequalities need no further rewriting. We illustrate the passage to canonical form using some Bell and Bell-like inequalities that serve as linear witnesses for quantum phenomenon such as nonlocality and local operations and shared randomness (LOSR)-based genuine tripartite nonlocality.

\subsection{Family of tilted CHSH inequalities}
We consider first the tilted Bell inequality from~\cite{AMP2012}:
\begin{equation}\label{eq:AcinMassarPironio_BellIneq}
\beta C_{0}^{\{1\}}+\sum_{x,y=0}^{1}\alpha^{1-x}(-1)^{xy}C_{xy}^{\{1,2\}}\le\beta+2\alpha.
\end{equation}
In~\eqref{eq:AcinMassarPironio_BellIneq}, $\alpha\ge 1,\beta\ge 0$. For $\beta=0$ and $\alpha=1$, it corresponds to the CHSH inequality. The marginal correlator in the Bell functional in~\eqref{eq:AcinMassarPironio_BellIneq} when defined assuming the no-signalling constraints to hold has the following form, following~\eqref{e:corr}:
\begin{equation}
C_{x}^{\{1\}}\coloneqq\sum_{a,b=0}^{1}(-1)^{a}p(ab\lvert xy).\label{eq:AMP_MarginalCorr}
\end{equation}
The full correlator is as given in~\eqref{eq:FullCorrelator_CHSH}. While the full correlator stays unaffected regardless of signalling, the marginal correlator is well-defined only if the no-signalling constraints hold for the behaviour $\mathbf p$, since the following equality is satisfied, as a consequence, for $y\neq y'$:
\begin{equation}\label{eq:C_x_WellDefined}
\sum_{a=0}^{1}(-1)^{a}p(a\lvert xy)=\sum_{a=0}^{1}(-1)^{a}p(a\lvert xy'),
\end{equation}
where $p(a\lvert x,y)=\sum_{b}p(ab\lvert xy)$. For a weakly-signalling behaviour, however, the equality in~\eqref{eq:C_x_WellDefined} need not hold, so \textit{a priori} it is not clear how to evaluate the Bell function of \eqref{eq:AcinMassarPironio_BellIneq} on a signalling behaviour $\mathbf{p}$. To proceed, a version of the Bell functional in~\eqref{eq:AcinMassarPironio_BellIneq} that is invariant under $L^2$-projection is obtained by replacing the marginal correlator $C_{0}^{\{1\}}$ with its projection-invariant version (the uniformly-averaged marginal correlator), as defined in~\eqref{eq:Marginal_Corr}. We note this procedure can be observed in an earlier work in the (2,2,2) case; see~\cite{Nieto-Silleras_2018} equations (37)-(40) and footnote 2. Plugging the UMC, which is $\frac{1}{2}\sum_{y}\sum_{a,b}(-1)^{a}p(ab\lvert 0y)$, in~\eqref{eq:AcinMassarPironio_BellIneq} and writing out the full correlators $C_{xy}^{\{1,2\}}$ as linear combinations of the conditional probabilities (as shown in~\eqref{eq:FullCorrelator_CHSH}), we have the projection-invariant version, shown in~\eqref{eq:AMP_ProjInvVersion}, which can be considered as a canonical version. In~\eqref{eq:AMP_ProjInvVersion}, $\mathbf{p}_{xy}\in\mathbb{R}^4$ denotes the vector $(p(00\lvert xy),p(01\lvert xy),p(10\lvert xy),p(11\lvert xy))^T$, for $x,y\in\{0,1\}$, and $\mathbf{w}_{\alpha,\beta}=(\beta\mathbf{r}+2\alpha\mathbf{s})/2$ for $\mathbf{r}=(1,1,-1,-1)^T, \mathbf{s}=(1,-1,-1,1)^T$.
\begin{equation}\label{eq:AMP_ProjInvVersion}
\mathbf{w}_{\alpha,\beta}^T(\mathbf{p}_{00}+\mathbf{p}_{01})+\mathbf{s}^T(\mathbf{p}_{10}-\mathbf{p}_{11})\le\beta+2\alpha.
\end{equation}

\subsection{\texorpdfstring{$I_{3322}$}{I3322} inequality}
The $I_{3322}$ inequality is one of the two classes of non-trivial facet inequalities of the local polytope of behaviours for the $(2,3,2)$ Bell scenario~\cite{Sliwa_2003}; the other class being the liftings of CHSH. Its standard representation is a linear combination of conditional probabilities (as shown, for instance, in expression (19) in~\cite{DC_2004}):
\begin{align}\label{eq:I3322_probForm}
-p_{1}(0\lvert 0) - 2p_{2}(0\lvert 0) - p_{2}(0\lvert 1) + p(00\lvert 00)&\nonumber \\
+\,p(00\lvert 10) + p(00\lvert 20) + p(00\lvert 01) + p(00\lvert 11)&\nonumber \\
- p(00\lvert 21) + p(00\lvert 02) - p(00\lvert 12) &\le 0.
\end{align}
In~\eqref{eq:I3322_probForm}, for $a,b\in\{0,1\}$ and $x,y\in\{0,1,2\}$, $p_{1}(a\lvert x)$ and $p_{2}(b\lvert y)$ are the marginal conditional probabilities of the two parties $\mathsf{A}_1$ and $\mathsf{A}_2$, respectively, which are well-defined only when no-signalling holds for (the joint) $p(ab\lvert xy)$. It can be expressed equivalently in terms of correlators using the definitions in~\eqref{eq:FullCorrelator_CHSH} and~\eqref{eq:AMP_MarginalCorr} as
\begin{align}\label{eq:I3322_corrForm}
C_{0}^{\{1\}} + C_{1}^{\{1\}} - C_{0}^{\{2\}} - C_{1}^{\{2\}} + C_{00}^{\{1,2\}} &\nonumber \\
+\, C_{01}^{\{1,2\}} + C_{10}^{\{1,2\}} + C_{11}^{\{1,2\}} + C_{20}^{\{1,2\}}&\nonumber \\
-\, C_{21}^{\{1,2\}} + C_{02}^{\{1,2\}} - C_{12}^{\{1,2\}} &\le 4,
\end{align}
which can be transformed to the canonical form by performing a like-for-like substitution---replacing the standard correlators by their projection-invariant (UMC) counterparts. The canonical form for~\eqref{eq:I3322_probForm} is then the following:
\begin{equation}\label{eq:I3322_canonicalForm}
\mathbf{u}^T\sum_{x,y=0}^{1}\mathbf{p}_{xy} + \mathbf{v}^T(\mathbf{p}_{20}-\mathbf{p}_{21}+\mathbf{p}_{02}-\mathbf{p}_{12}) \le 4,    
\end{equation}
where $\mathbf{u}=(1,0,-2,1)^T, \mathbf{v}=(1,-1,-1,1)^T$, and $\mathbf{p}_{xy}\equiv(p(00\lvert xy),p(01\lvert xy),p(10\lvert xy),p(11\lvert xy))^T$.

\subsection{Bell-like inequality for LOSR-GTNL}
A Bell-like inequality serving as a linear witness for LOSR-based genuine tripartite nonlocality~\cite{Mao2022} is shown below:
\begin{equation}\label{eq:MaoLinearWitness}
C_{00}^{\{1,2\}} + C_{01}^{\{1,2\}} + C_{101}^{\{1,2,3\}} - C_{111}^{\{1,2,3\}} + 2C_{00}^{\{1,3\}} \le 4,
\end{equation}
where the marginal correlator terms are defined below for $a,b,c,x,y,z\in\{0,1\}$. In~\eqref{eq:MaoLinearWitness}, the marginal correlators $C_{xy}^{\{1,2\}}$ and $C_{xz}^{\{1,3\}}$ are defined analogously to~\eqref{eq:FullCorrelator_CHSH} and the full correlators are as defined in~\eqref{eq:FullCorrelator_Mermin}. Replacing each correlator term in~\eqref{eq:MaoLinearWitness} with the corresponding UMC term gives the canonical version of the inequality shown below:
\begin{equation}\label{eq:MaoLinearWitness_Canonical}
\frac{1}{2}\mathbf{h}^T(\mathbf{p}_{001}+\mathbf{p}_{011})+\mathbf{t}^T(\mathbf{p}_{101}-\mathbf{p}_{111})
+\,\frac{1}{2}(\mathbf{h}+2\mathbf{r})^T(\mathbf{p}_{000}+\mathbf{p}_{010})\le 4,
\end{equation}
where $\mathbf{p}_{xyz}\equiv\big(p(abc\lvert xyz)\big)_{abc}^T\in\mathbb{R}^8$ for $a,b,c\in\{0,1\}$ arranged in lexicographic order with $a$ as the most-significant and $c$ as the least-significant bit. The ``parity'' weight vectors over outcomes $\mathbf{h},\mathbf{r},\mathbf{t}\in\mathbb{R}^8$ are defined as: $\mathbf{h}\coloneqq\big((-1)^{a\oplus b}\big)_{abc}^T$, $\mathbf{r}\coloneqq\big((-1)^{a\oplus c}\big)_{abc}^T$ and $\mathbf{t}\coloneqq\big((-1)^{a\oplus b\oplus c}\big)_{abc}^T$. Equivalently, they can be seen as $\mathbf{h}=(\mathbf{a}\otimes\mathbf{a}\otimes\mathbf{1}_2)^T$, $\mathbf{r}=(\mathbf{a}\otimes\mathbf{1}_2\otimes\mathbf{a})^T$ and $\mathbf{t}=(\mathbf{a}^{\otimes 3})^T$ for $\mathbf{a}=(1,-1)$ and $\mathbf{1}_2=(1,1)$.

\subsection{Scope and nonlinear Bell-type inequalities}
The canonicalisation procedure discussed before (in the earlier subsections) and its mechanism is linear algebraic. A linear Bell expression is not uniquely represented outside the no-signalling set. Indeed, two linear expressions may differ by a linear combination of no-signalling constraints, and hence are equivalent on $\mathcal{P}_{\mathrm{NS}}$, while giving different values on weakly signalling empirical behaviours for which those constraints are not exactly satisfied. The UMC/correlator representation selects a representative of this equivalence class whose coefficient vector lies in $\mathrm{ker}(A_{\mathrm{eq}})$, and is therefore invariant under the affine projection $\Pi_{\mathcal A}$.

This argument does not directly extend to arbitrary nonlinear Bell-type inequalities. If a nonlinear witness can be written as $F(\{C_{\mathbf{x}_I}^I\}_{\mathbf{x}_I,I})$, where $F$ depends on projection-invariant full-correlators and UMCs, then its value is also invariant under $\Pi_{\mathcal A}$, since each argument of $F$ is invariant. However, a generic nonlinear expression in the raw probabilities $p(\mathbf{a}|\mathbf{x})$, or in marginal quantities not replaced by their UMC counterparts, need not be invariant under the affine projection. Thus the results of this work should be understood as giving a canonical projection-invariant representative for linear Bell inequalities in the $(n,m,2)$ scenario. Nonlinear Bell-type inequalities, including those arising in network or entropic formulations, require a separate analysis.

\section{Generalisation to weighted \texorpdfstring{$L^2$}{L2} norms}
Our work on $L^2$ projections up to this point is most naturally tailored to a scenario under which all settings occur with equal probability, such as in Table~\ref{tab:Distr_Sig_222} where each of the four measurement configurations is sampled about 1.25~M times.  This is common for most implementations of Bell experiments, but one can consider situations where settings occur with different probabilities. Indeed, some quantum information theoretic protocols motivate non-uniform settings distributions, such as device-independent quantum key distribution~\cite{Acín_2006,Arnon-Friedman2018,Schwonnek2021}, where secret key is extracted from a single setting configuration while other configurations are sampled only occasionally to perform ``spot checks'' to ensure the protocol remains secure, and device-independent randomness expansion~\cite{Miller2017,Bhavsar_2023,Shalm2021,Liu2021}, where randomness for selecting settings is considered an input resource to be minimised through utilisation of a non-uniform settings distribution.

Referring to Table \ref{tab:Distr_Sig_222}, if there were instead (say) $1$ million counts for the first row corresponding to setting $xy=00$, but only about ~1,000 counts for each of the other three setting configurations, one would expect empirically computed probabilities of the form $p(\cdot|00)$ to exhibit smaller deviations from the true underlying distribution compared to empirical estimates of $p(\cdot\lvert 01)$, $p(\cdot\lvert 10)$, $p(\cdot\lvert 11)$. Correspondingly, when projecting onto the no-signalling set, it would make sense to modify the least squares weighting such that deviations in $p(\cdot|00)$ terms are penalised more heavily, in proportion to the higher probability with which these settings occur. 

This is naturally formalised by encoding the different weightings with a generalised inner product $\langle \cdot, \cdot \rangle_D$ given by the expression  
\begin{equation}\label{eq:genip}
    \langle \mathbf p_1,\mathbf p_2 \rangle_D = \mathbf p_1^T D\mathbf p_2
\end{equation}
where $D\in\mathbb{R}^{d\times d}$ is a diagonal matrix with positive diagonal entries reflecting the different settings weights; \eqref{eq:genip} defines an inner product in this case because $D$ is positive definite. The corresponding norm $\norm{\cdot}_D$ is induced by the expression
\begin{equation}\label{eq:gennorm}
\norm{\mathbf p}_D= \sqrt{\langle \mathbf p, \mathbf p\rangle_D}
\end{equation}
according to which computing the weighted projection of $\mathbf f$ onto the no-signalling space now becomes the question of finding
\begin{equation}\label{eq:genLS}
\widehat{\mathbf p}=\argmin_{\mathbf{p}\in\mathcal{A}}\norm{\mathbf{f}-\mathbf{p}}_{D}^{2}.
\end{equation}
The solution to \eqref{eq:genLS} is given by 
\begin{equation}\label{eq:genAffProj}
\Pi^D_{\mathcal{A}}(\mathbf{f})\coloneqq\Pi^D_{\mathrm{ker}}(\mathbf{f}-\mathbf{d})+\mathbf{d}
\end{equation}
where $\mathbf d$ is a displacement vector in $\mathcal A$ and $\Pi^D_{\mathrm{ker}}$ can be obtained by solving the generalised normal equations for minimum $\norm{\cdot}_D$-distance to the subspace $\mathrm{ker}(A_{\mathrm{eq}})$, which yields the expression
\begin{equation}
\Pi^D_{\mathrm{ker}}=B(B^TDB)^{-1}B^TD,
\end{equation}
recalling $B$ is a matrix whose columns form a basis of $\mathrm{ker}(A_{\mathrm{eq}})$. 

UMCs are no longer invariant under the affine projection of \eqref{eq:genAffProj}, but averaged correlators with weightings corresponding to the $D$ entries are. Specifically, terms of the form $\langle\mathbf{c}_{\tilde{\mathbf{u}}_I}^I,\mathbf{p}\rangle_D$ are invariant, but now the original (uniform) $\mathbf{c}_{\tilde{\mathbf{u}}_I}^I$ coefficients yield a different averaging via the $D$-weighted inner product. To see this invariance, take the displacement vector in~\eqref{eq:genAffProj} to be $\mathbf{d}'$ such that $\Pi_{\mathrm{ker}}^D\mathbf{d}'=\mathbf{0}$, which can be obtained from any $\mathbf d \in \mathcal A$ by setting $\mathbf{d}' = \mathbf d - \Pi^D_{\mathrm{ker}}\mathbf d$. This replacement will also be orthogonal (with respect to $\langle \cdot,\cdot\rangle_D$) to any vector in $\mathbf v \in \mathrm{ker}(A_\mathrm{eq})$, because noting $\mathbf v$ is expressible as a linear combination of $B$ columns $\mathbf v = B\mathbf x$, we have
\begin{equation*}
\langle\mathbf{d}',\mathbf{v}\rangle_D
= \mathbf{d}^T D\mathbf{v} - \mathbf{d}^T\big(B(B^T DB)^{-1}B^T D\big)^T D\mathbf{v}
= \mathbf{d}^T D\mathbf{v} - \mathbf{d}^T DB(B^T DB)^{-1}B^T D\mathbf{v}=\mathbf 0.
\end{equation*}
And so with this choice of
displacement vector, the definition in~\eqref{eq:genAffProj} becomes $\Pi_{\mathrm{ker}}^D\mathbf{f}+\mathbf{d}'$. Now, since $\mathbf{c}_{\tilde{\mathbf{u}}_I}^I$ is still in $\mathrm{ker}(A_{\mathrm{eq}})$---the linear structure of the vector space is unchanged by adopting a new inner product---so, using shorthand $\mathbf c = \mathbf{c}_{\tilde{\mathbf{u}}_I}^I$, we can write
\begin{align}\label{eq:genproof}
\langle\mathbf{c},\widehat{\mathbf{p}}\rangle_D &= \langle\mathbf{c},\Pi^D_{\mathrm{ker}}\mathbf{f}\rangle_D +\langle\mathbf{c},\mathbf{d}'\rangle_D = \mathbf c^TDB(B^TDB)^{-1}B^TD\mathbf f \nonumber\\
&=\big(B(B^TDB)^{-1}B^TD\mathbf c\big)^TD\mathbf f = \mathbf c^TD\mathbf f=\langle\mathbf{c},\mathbf{f}\rangle_D,
\end{align}
where the invariance of $\mathbf c$ under the map $B(B^TDB)^{-1}B^TD$ is seen by writing $\mathbf c$ as $B\mathbf x$.

We remark that while the $D$-projection-invariant quantity $\langle \mathbf{c}_{\tilde{\mathbf{u}}_I}^I, \cdot \rangle_D$ is a non-uniform average of correlators, for a \textit{no-signalling} behaviour, all the averaged terms are equal to the standard no-signalling correlator given in \eqref{eq:nm2Corr}, and so their average will also equal \eqref{eq:nm2Corr}. This means that the results of Section \ref{s:ClosedFormProj} can be immediately adapted to generate a simple closed-form expression for the $D$-weighted projection by changing (only) the map $T_1$ to account for the non-uniform averaging, thus allowing the corresponding computational simplification for obtaining the closest no-signalling approximation to an empirical distribution in a setup with unbalanced setting weights. The results of Section \ref{s:CanonicalBellExpressions} can be generalised as well to construct weighted-$L_2$-projection invariant Bell inequalities, though the deeper significance of uniquely capturing no-signalling nonlocality is best expressed through the balanced canonical forms of Section \ref{s:CanonicalBellExpressions}.

\section{Conclusion}
We presented a projection-invariant formulation of linear Bell inequalities for 
$(n,m,2)$ Bell scenarios and a sparse, closed formula for the projector that maps empirical frequencies to the no-signalling affine hull without altering the value of a canonical Bell functional. The key observation is that full correlators and uniformly-averaged marginal correlators lie in the kernel of the matrix encoding the no-signalling and normalisation constraints, so evaluating inequalities in this correlator basis makes the reported violation insensitive to weak signalling in finite data when the canonical form is used. Computationally, our three-map pipeline implements the projection with structured sparsity, avoiding dense inverses and scaling cleanly to larger instances, and it can be readily generalised to applications in which differently weighted projections are desired. This canonical, project-then-evaluate workflow standardises comparisons across experiments and benefits device-independent tasks such as randomness generation, QKD, and multipartite nonlocality/entanglement certification, by ensuring that the Bell value reflects nonlocal structure rather than residual signalling.
 
\section*{Acknowledgements}
This work was partially supported by NSF Award Nos.~$2210399$ and~$2328800$.

\appendix

\section{Probability bound for the event of negative entries in \texorpdfstring{$\widehat{\mathbf p}$}{phat}}\label{a:NegativeProbIssue}
Here we furnish the relevant details related to the results presented in Section~\ref{ss:FiniteSampleProbBound}. We begin with the derivation of the re-expression in~\eqref{eq:Re-epressProjFormula} of the projection formula in~\eqref{eq:ProjectionFormula}. Starting with the formula given in~\eqref{eq:ProjectionFormula}, i.e.,
\begin{equation}\label{eq:ProjFormula1}
\widehat{p}(\mathbf{a}|\mathbf{x}) = \frac{1}{2^n}\sum_{I\subseteq[n]}\chi_I(\mathbf{a})\frac{1}{m^{n-|I|}}\sum_{\mathbf{x}_{\bar I},\mathbf{a}'}\chi_I(\mathbf{a}')f(\mathbf{a}'|\mathbf{x}_I,\mathbf{x}_{\bar{I}}),
\end{equation}
where $\mathbf{x}_I$ is fixed to equal the values of the target input string $\mathbf{x}$ on the subset $I$, and making explicit that in the inner sum the input string $(\mathbf{x}_I,\mathbf{x}_{\bar{I}})$ agrees with the target input $\mathbf x$ on the coordinates in $I$ while the coordinates in $\bar{I}$ are averaged over, we now introduce a full input string $\mathbf{x}'\in\mathcal{S}^n$, where the condition that $\mathbf{x}'$ agrees with $\mathbf{x}$ on $I$ is $x_i=x'_i$, for all $i\in I$. Therefore,
\begin{equation*}
\sum_{\mathbf{x}_{\bar I}}f(\mathbf{a}'|\mathbf{x}_I,\mathbf{x}_{\bar I}) = \sum_{\mathbf{x}'}[\![I\subseteq K(\mathbf{x},\mathbf{x}')]\!]f(\mathbf{a}'|\mathbf{x}'),
\end{equation*}
where the function $[\![X]\!]$ evaluates to $1$ if the condition $X$ holds, and $0$ otherwise. Substituting the above formula into~\eqref{eq:ProjFormula1} we get:
\begin{align*}
\widehat{p}(\mathbf{a}|\mathbf{x}) &= \sum_{\mathbf{a}',\mathbf{x}'}\Big[\frac{1}{2^n}\sum_{I\subseteq[n]}\frac{[\![I\subseteq K(\mathbf{x},\mathbf{x}')]\!]}{m^{n-|I|}}\chi_I(\mathbf{a})\chi_I(\mathbf{a}')\Big]f(\mathbf{a}'|\mathbf{x}')\\
&= \sum_{\mathbf{a}',\mathbf{x}'}\Big[\frac{1}{2^n}\sum_{I\subseteq K(\mathbf{x},\mathbf{x}')}\frac{1}{m^{n-|I|}}\chi_I(\mathbf{a})\chi_I(\mathbf{a}')\Big]f(\mathbf{a}'|\mathbf{x}')\\
&= \sum_{\mathbf{a}',\mathbf{x}'}\Big[\frac{1}{2^n}\sum_{I\subseteq K(\mathbf{x},\mathbf{x}')}\frac{1}{m^{n-|I|}}\prod_{i\in I}(-1)^{a_i\oplus a'_i}\Big]f(\mathbf{a}'|\mathbf{x}')\\
&= \sum_{\mathbf{a}',\mathbf{x}'}\Big[\underbrace{\frac{1}{(2m)^n}\sum_{I\subseteq K(\mathbf{x},\mathbf{x}')}\prod_{i\in I}m(-1)^{a_i\oplus a'_i}}_{\eqqcolon H_{\mathbf{a},\mathbf{x}}(\mathbf{a}',\mathbf{x}')}\Big]f(\mathbf{a}'|\mathbf{x}')
\end{align*}
Using the elementary product identity $\sum_{I\subseteq K}\prod_{i\in I}t_i = \prod_{i\in K}(1+t_i)$ with $t_i = m(-1)^{a_i\oplus a'_i}$ in the expression of $H_{\mathbf{a},\mathbf{x}}(\mathbf{a}',\mathbf{x}')$ above, we can write 
\begin{align*}
\widehat{p}(\mathbf{a}|\mathbf{x}) &= \sum_{\mathbf{a}',\mathbf{x}'}H_{\mathbf{a},\mathbf{x}}(\mathbf{a}',\mathbf{x}')f(\mathbf{a}'|\mathbf{x}'),\\
\text{where }H_{\mathbf{a},\mathbf{x}}(\mathbf{a}',\mathbf{x}') &= \frac{1}{(2m)^n}\prod_{i\in K(\mathbf{x},\mathbf{x}')}(1+m(-1)^{a_i\oplus a'_i}).
\end{align*}

Having derived the alternate formula for $\widehat{p}(\mathbf{a}|\mathbf{x})$ above, we now can utilise it to apply a probability bound. The applicability condition of the finite-sample probability-bound inequality that we are trying to apply is referred to as the bounded differences property: Let $Z_1,Z_2,\ldots,Z_M$ be independent random variables, and let $F=F(Z_1,Z_2,\ldots,Z_M)$ be a real-valued function. Suppose changing only the $j$'th input can change $F$ by at most $c_j$, i.e.,
\begin{equation*}
\sup_{z_1,\ldots,z_M,z'_j}|F(z_1,\ldots,z_j,\ldots,z_M)-F(z_1,\ldots,z'_j,\ldots,z_M)| \le c_j.
\end{equation*}
Then the McDiarmid inequality says~\cite{McDiarmid_1989,Hoeffding_63}: 
\begin{equation}\label{eq:McDiarmidIneq}
\mathbb{P}[F-\mathbb{E}[F]\le -t] \le \exp\left(-\frac{2t^2}{\sum_{j=1}^M c_j^2}\right).
\end{equation}
For our purposes, we only need the one-sided version of the inequality given above. Now what are the $Z_j$'s in our problem? Fix a coordinate $(\mathbf{a},\mathbf{x})$. The random variables are the observed outcomes in the finite experiment. For each setting $\mathbf{x}'$, suppose we perform $N_{\mathbf{x}'}$ trials. Let $A_{\mathbf{x}',r}\in\{0,1\}^n$, for $r=1,2,\ldots,N_{\mathbf{x}'}$, be the outcome-string in the $r$'th trial with setting $\mathbf{x}'$. Under the sampling model, we have $A_{\mathbf{x}',r}\sim p_0(\cdot|\mathbf{x}')$, independent over $\mathbf{x}'$ and $r$. The empirical frequencies are:
\begin{equation*}
f(\mathbf{a}'|\mathbf{x}')=\frac{1}{N_{\mathbf{x}'}}\sum_{r=1}^{N_{\mathbf{x}'}}[\![A_{\mathbf{x}',r}=\mathbf{a}']\!].
\end{equation*}
Since $\widehat{p}(\mathbf{a}|\mathbf{x})$ is a linear function of $f$, it is also a function of all the random variables $A_{\mathbf{x}',r}$.

Next, for a fixed $(\mathbf{a},\mathbf{x})$, define $F\coloneqq \widehat{p}(\mathbf{a}|\mathbf{x})$. Substituting the expression of the empirical frequencies above in~\eqref{eq:Re-epressProjFormula} we get:
\begin{equation*}
\sum_{\mathbf{a}',\mathbf{x}'}H_{\mathbf{a},\mathbf{x}}(\mathbf{a}',\mathbf{x}')f(\mathbf{a}',\mathbf{x}') = \sum_{\mathbf{a}',\mathbf{x}'}H_{\mathbf{a},\mathbf{x}}(\mathbf{a}',\mathbf{x}')\frac{1}{N_{\mathbf{x}'}}\sum_{r=1}^{N_{\mathbf{x}'}}[\![A_{\mathbf{x}',r}=\mathbf{a}']\!]=\sum_{\mathbf{x}'}\frac{1}{N_{\mathbf{x}'}}\sum_{r=1}^{N_{\mathbf{x}'}}H_{\mathbf{a},\mathbf{x}}(A_{\mathbf{x}',r},\mathbf{x}').
\end{equation*}
Thus,
\begin{equation*}
F = F(\{A_{\mathbf{x}',r}\}_{\mathbf{x}',r}) = \sum_{\mathbf{x}'}\frac{1}{N_{\mathbf{x}'}}\sum_{r=1}^{N_{\mathbf{x}'}}H_{\mathbf{a},\mathbf{x}}(A_{\mathbf{x}',r},\mathbf{x}'),
\end{equation*}
which is actually an independent sum of bounded random variables, so one could apply Hoeffding inequality directly. McDiarmid inequality gives the same type of bound. Now we ask the question: How much can $F=\widehat{p}(\mathbf{a}|\mathbf{x})$ change if we change one observed outcome? Fix a setting $\mathbf{x}'$ and a trial $r$. Suppose the outcome changes from $A_{\mathbf{x}',r}=\mathbf{a}'$ to $A_{\mathbf{x}',r}=\mathbf{a}''$. The change $|\Delta F|$ in $F$ is
\begin{equation*}
\Big|\frac{1}{N_{\mathbf{x}'}}H_{\mathbf{a},\mathbf{x}}(\mathbf{a}',\mathbf{x}')-H_{\mathbf{a},\mathbf{x}}(\mathbf{a}'',\mathbf{x}')\Big|
\end{equation*}
Therefore,
\begin{equation}\label{eq:Change_in_F_atMost_Delta}
|\Delta F| \le \frac{1}{N_{\mathbf{x}'}}\Big(\max_{\mathbf{a}'}H_{\mathbf{a},\mathbf{x}}(\mathbf{a}',\mathbf{x}')-\min_{\mathbf{a}'}H_{\mathbf{a},\mathbf{x}}(\mathbf{a}',\mathbf{x}')\Big)
\end{equation}
Define
\begin{equation}\label{eq:Range}
\Delta_{\mathbf{a},\mathbf{x}}(\mathbf{x}')\coloneqq\max_{\mathbf{a}'}H_{\mathbf{a},\mathbf{x}}(\mathbf{a}',\mathbf{x}')-\min_{\mathbf{a}'}H_{\mathbf{a},\mathbf{x}}(\mathbf{a}',\mathbf{x}')
\end{equation}
Then changing one outcome in the $\mathbf{x}'$-block changes $F$ by at most $c_{\mathbf{x}',r}=\Delta_{\mathbf{a},\mathbf{x}}(\mathbf{x}')/N_{\mathbf{x}'}$. There are $N_{\mathbf{x}'}$ such trials in the $\mathbf{x}'$-block, so the denominator in the McDiarmid bound in~\eqref{eq:McDiarmidIneq} becomes
\begin{equation*}
\sum_{\mathbf{x}'}\sum_{r=1}^{N_{\mathbf{x}'}}c_{\mathbf{x}',r}^2 = \sum_{\mathbf{x}'}N_{\mathbf{x}'}\left(\frac{\Delta_{\mathbf{a},\mathbf{x}}(\mathbf{x}')}{N_{\mathbf{x}'}}\right)^2 = \sum_{\mathbf{x}'}\frac{\Delta_{\mathbf{a},\mathbf{x}}(\mathbf{x}')^2}{N_{\mathbf{x}'}}.
\end{equation*}
Now we can apply the inequality which gives us:
\begin{equation*}
\mathbb{P}[\widehat{p}(\mathbf{a}|\mathbf{x})-\mathbb{E}[\widehat{p}(\mathbf{a}|\mathbf{x})] \le -t] \le \exp\left(-\frac{2t^2}{\sum_{\mathbf{x}'}\Delta_{\mathbf{a},\mathbf{x}}(\mathbf{x}')^2/N_{\mathbf{x}'}}\right).
\end{equation*}
Since $\mathbb{E}[\widehat{p}(\mathbf{a}|\mathbf{x})]=p_0(\mathbf{a}|\mathbf{x})$, as discussed in Section~\ref{ss:FiniteSampleProbBound}, we can rewrite the above as:
\begin{equation*}
\mathbb{P}[\widehat{p}(\mathbf{a}|\mathbf{x})-p_0(\mathbf{a}|\mathbf{x}) \le -t] \le \exp\left(-\frac{2t^2}{\sum_{\mathbf{x}'}\Delta_{\mathbf{a},\mathbf{x}}(\mathbf{x}')^2/N_{\mathbf{x}'}}\right).
\end{equation*}
To bound negativity, we set $t=p_0(\mathbf{a}|\mathbf{x})$. Then $\widehat{p}(\mathbf{a}|\mathbf{x})<0\Leftrightarrow\widehat{p}(\mathbf{a}|\mathbf{x})-p_0(\mathbf{a}|\mathbf{x})<-p_0(\mathbf{a}|\mathbf{x})$. And so the above can be re-expressed further as:
\begin{equation*}
\mathbb{P}[\widehat{p}(\mathbf{a}|\mathbf{x}) < 0] \le \exp\left(-\frac{2p_0(\mathbf{a}|\mathbf{x})^2}{\sum_{\mathbf{x}'}\Delta_{\mathbf{a},\mathbf{x}}(\mathbf{x}')^2/N_{\mathbf{x}'}}\right),
\end{equation*}
which is what we have in~\eqref{eq:McDiarmidIneq_SS3.1}.

We now turn to deriving the expression for $\Delta_{\mathbf{a},\mathbf{x}}(\mathbf{x}')$ in~\eqref{eq:Delta_def}. Fix $\mathbf{a},\mathbf{x},\mathbf{x}'$, and recall that
\begin{equation*}
H_{\mathbf{a},\mathbf{x}}(\mathbf{a}',\mathbf{x}') = \frac{1}{(2m)^n}\prod_{i\in K(\mathbf{x},\mathbf{x}')}(1+m(-1)^{a_i\oplus a'_i}),
\end{equation*}
where $K(\mathbf{x},\mathbf{x}')\coloneqq\{i\colon x_i=x'_i\}$ and $k(\mathbf{x},\mathbf{x}')=|K(\mathbf{x},\mathbf{x}')|$. The range, as defined in~\eqref{eq:Range}, is 
\begin{equation*}
\Delta_{\mathbf{a},\mathbf{x}}=\max_{\mathbf{a}'}H_{\mathbf{a},\mathbf{x}}(\mathbf{a}',\mathbf{x}')-\min_{\mathbf{a}'}H_{\mathbf{a},\mathbf{x}}(\mathbf{a}',\mathbf{x}').
\end{equation*}
We first look at the case when $K(\mathbf{x},\mathbf{x}')=\varnothing$. Then $k(\mathbf{x},\mathbf{x}')=|K(\mathbf{x},\mathbf{x}')|=0$, and the product in the expression of $H_{\mathbf{a},\mathbf{x}}(\mathbf{a}',\mathbf{x}')$ is an empty product. Thus $H_{\mathbf{a},\mathbf{x}}(\mathbf{a}',\mathbf{x}')=1/(2m)^n$ for every $\mathbf{a}'$, which implies that $\max_{\mathbf{a}'}H_{\mathbf{a},\mathbf{x}}(\mathbf{a}',\mathbf{x}')=\min_{\mathbf{a}'}H_{\mathbf{a},\mathbf{x}}(\mathbf{a}',\mathbf{x}')$. So for this case $\Delta_{\mathbf{a},\mathbf{x}}(\mathbf{x}')=0$.

Now consider the case when $K(\mathbf{x},\mathbf{x}')\neq\varnothing$, i.e., $k(\mathbf{x},\mathbf{x}')\ge 1$. For each $i\in K(\mathbf{x},\mathbf{x}')$, the factor $1+m(-1)^{a_i\oplus a'_i}$ has only two possible values: If $a_i=a'_i$, then $a_i\oplus a'_i=0$, so $1+m(-1)^{a_i\oplus a'_i}=1+m$. And if $a_i\neq a'_i$, then $a_i\oplus a'_i=1$, so $1+m(-1)^{a_i\oplus a'_i}=1-m$. Therefore each factor in the product $\prod_{i\in K(\mathbf{x},\mathbf{x}')}(1+m(-1)^{a_i\oplus a'_i})$ is either $1+m$ or $1-m$. Now suppose that among the $k$ indices in $K(\mathbf{x},\mathbf{x}')$, exactly $r$ of them satisfy $a_i\neq a'_i$. Then exactly $k-r$ of them satisfy $a_i=a'_i$, and the product becomes $(1+m)^{k-r}(1-m)^r$, or equivalently, $(-1)^r(m+1)^{k-r}(m-1)^r$. Thus the possible values of the product is 
\begin{equation*}
P_r = (-1)^r(m+1)^{k-r}(m-1)^r,\,\text{for }r=0,1,\ldots,k.
\end{equation*}
$P_r$ is positive when $r$ is even, and it is negative when $r$ is odd. The largest possible value occurs at $r=0$. So $\max_{r}P_r=P_0=(m+1)^k$. This maximum is achieved when $a_i=a'_i$ for all $i\in K(\mathbf{x},\mathbf{x}')$. Therefore,
\begin{equation}
\max_{\mathbf{a}'}H_{\mathbf{a},\mathbf{x}}(\mathbf{a}',\mathbf{x}') = \frac{(m+1)^k}{(2m)^n}.
\end{equation}
To find the minimum, notice that it must occur among the negative values, so $r$ must be odd. The \textit{magnitude} among odd $r$ is
$(m+1)^{k-r}(m-1)^r$. This is largest when $r=1$, because each time $r$ increases by $2$, the magnitude is multiplied by $\left(\frac{m-1}{m+1}\right)^2<1$. Hence, the most negative product occurs at $r=1$, i.e.,
\begin{equation*}
P_1 = (m+1)^{k-1}(1-m) = -(m-1)(m+1)^{k-1}.
\end{equation*}
Therefore,
\begin{equation}
\min_{\mathbf{a}'}H_{\mathbf{a},\mathbf{x}}(\mathbf{a}',\mathbf{x}') = -\frac{(m-1)(m+1)^{k-1}}{(2m)^n},
\end{equation}
which is achieved when exactly one index $i\in K(\mathbf{x},\mathbf{x}')$ has $a_i\neq a'_i$, and all other matched indices satisfy $a_j=a'_j$. And so, per the definition of $\Delta_{\mathbf{a},\mathbf{x}}(\mathbf{x}')$, subtracting the minimum value from the maximum value followed by some algebraic simplification gives:
\begin{equation}
\Delta_{\mathbf{a},\mathbf{x}}(\mathbf{x}') = \frac{(m+1)^k}{(2m)^n} - \left(-\frac{(m-1)(m+1)^{k-1}}{(2m)^n}\right) = \frac{2m(m+1)^{k-1}}{(2m)^n}.
\end{equation}
Finally, tying all the results together, we have:
\begin{align*}
\Delta_{\mathbf{a},\mathbf{x}}(\mathbf{x}') = \begin{cases}
0, &k(\mathbf{x},\mathbf{x}')=0,\\
\frac{2m(m+1)^{k(\mathbf{x},\mathbf{x}')-1}}{(2m)^n}, &k(\mathbf{x},\mathbf{x}')\ge 1.
\end{cases}
\end{align*}
To obtain the probability bound for the balanced case in~\eqref{eq:ProbBound_BalancedCase} when $N_{\mathbf{x}'}=N$, we just need to derive an explicit expression for the term $\sum_{\mathbf{x}'}\Delta_{\mathbf{a},\mathbf{x}}(\mathbf{x}')^2/N_{\mathbf{x}'}$ appearing in the denominator of~\eqref{eq:ProbBound_UnbalancedCase}. We have in the balanced case for $k(\mathbf{x},\mathbf{x}')\ge 1$
\begin{equation*}
\sum_{\mathbf{x}'}\frac{\Delta_{\mathbf{a},\mathbf{x}}(\mathbf{x}')^2}{N_{\mathbf{x}'}} = \frac{4m^2}{N(2m)^{2n}}\sum_{\mathbf{x}'}(m+1)^{2(k(\mathbf{x},\mathbf{x}')-1)}.
\end{equation*}
For fixed $\mathbf{x}$, the number of $\mathbf{x}'$ such that $k(\mathbf{x},\mathbf{x}')=k$ is $\binom{n}{k}(m-1)^{n-k}$. This is because if there are $k$ positions such that $x_i=x'_i$, then we can choose any of the $m-1$ settings different from $x_i$ for the remaining $n-k$ positions. Therefore,
\begin{align*}
\sum_{\mathbf{x}'}\frac{\Delta_{\mathbf{a},\mathbf{x}}(\mathbf{x}')^2}{N_{\mathbf{x}'}} &= \frac{4m^2}{N(2m)^{2n}}\sum_{k=1}^n\binom{n}{k}(m-1)^{n-k}(m+1)^{2(k-1)}\\
&= \frac{4m^2}{N(2m)^{2n}(m+1)^2}\sum_{k=1}^n\binom{n}{k}(m-1)^{n-k}(m+1)^{2k}\\
&= \frac{4m^2}{N(2m)^{2n}(m+1)^2}[((m-1)+(m+1)^2)^n-(m-1)^n]\\
&= \frac{4m^2}{N(2m)^{2n}(m+1)^2}[m^n(m+3)^n-(m-1)^n].
\end{align*}
Plugging this expression with $\gamma\coloneqq\min_{\mathbf{a},\mathbf{x}}p_0(\mathbf{a}|\mathbf{x})$ in~\eqref{eq:ProbBound_UnbalancedCase} gives the expression in~\eqref{eq:ProbBound_BalancedCase}, where $\sum_{\mathbf{x}'}\Delta_{\mathbf{a},\mathbf{x}}(\mathbf{x}')^2/N_{\mathbf{x}'}=C_{n,m}/N$ with $C_{n,m}=\frac{4m^2}{(2m)^{2n}(m+1)^2}[m^n(m+3)^n-(m-1)^n]$.

\section{Proof of Proposition~\ref{prop:CorrVecBelongsToNullSpace}}\label{a:ProofProp1}
\begin{prop*}
For all $\tilde{\mathbf{u}}_{I}\in\mathcal{S}^{k}$, the UMC coefficient vector $\mathbf{c}_{\tilde{\mathbf{u}}_{I}}$ satisfies $\mathbf{c}_{\tilde{\mathbf{u}}_{I}}\in\mathrm{ker}(A_{\mathrm{eq}})$.    
\end{prop*}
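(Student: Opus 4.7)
My plan is to show that $\mathbf{c}^{I}_{\tilde{\mathbf{u}}_{I}} \in \mathrm{ker}(A_{\mathrm{eq}})$ by verifying that its (standard) inner product with every row of $A_{\mathrm{eq}}$ vanishes. Since $A_{\mathrm{eq}}$ partitions into the no-signalling rows $\mathbf{ns}^{i,r}_{(\tilde{\mathbf{a}}_{\neg i},\tilde{\mathbf{x}}_{\neg i})}$ and the normalisation rows $\mathbf{nrm}_{\tilde{\mathbf{x}}}$, this reduces to two computations that exploit the very sparse supports and sign patterns laid out in Section 2.3, together with the separable form of $\chi_I(\mathbf{a})$ and the factor $\delta_{\mathbf{x}_I,\tilde{\mathbf{u}}_I}$ in the definition \eqref{eq:MarginalCorrvec}.

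For a normalisation row, the inner product collapses to $\frac{1}{m^{n-|I|}}\delta_{\tilde{\mathbf{x}}_I,\tilde{\mathbf{u}}_I}\sum_{\mathbf{a}}\chi_I(\mathbf{a})$. Because $I \neq \varnothing$ (as $k \in [n]$), the sum factorises as $\prod_{i\in I}\sum_{a_i}(-1)^{a_i}$ times $2^{n-|I|}$, and the factor $\sum_{a_i}(-1)^{a_i}=0$ kills the whole expression. For a no-signalling row indexed by $(i,r,(\tilde{\mathbf{a}}_{\neg i},\tilde{\mathbf{x}}_{\neg i}))$, the pointwise product is nonzero only on the four coordinates where $\mathbf{a}_{\neg i}=\tilde{\mathbf{a}}_{\neg i}$, $\mathbf{x}_{\neg i}=\tilde{\mathbf{x}}_{\neg i}$, and $x_i\in\{\mathsf{s}_1,\mathsf{s}_r\}$. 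I then split into two cases according to whether $i\in I$ or $i\notin I$.

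If $i\in I$, then $\chi_I(\mathbf{a})=(-1)^{a_i}\chi_{I\setminus\{i\}}(\tilde{\mathbf{a}}_{\neg i})$ under the fixed $\mathbf{a}_{\neg i}$, while the no-signalling factor and the $\mathbf{x}_I$-constraint do not depend on $a_i$. Summing first over $a_i\in\{0,1\}$ produces $\sum_{a_i}(-1)^{a_i}=0$. If instead $i\notin I$, then $\chi_I(\mathbf{a})$ is independent of $a_i$ and the $\mathbf{x}_I$-constraint $\delta_{\mathbf{x}_I,\tilde{\mathbf{u}}_I}$ depends only on coordinates in $\mathbf{x}_{\neg i}$ which are already pinned to $\tilde{\mathbf{x}}_{\neg i}$; hence the sum over $x_i$ of $(\delta_{x_i,\mathsf{s}_1}-\delta_{x_i,\mathsf{s}_r})$ factors out and gives $1-1=0$. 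In either case the inner product vanishes, completing the argument.

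The only place a mild subtlety can arise is in Case 2, where one must be careful to check that the $\mathbf{x}_I$-constraint is insensitive to $x_i$ (so the signed telescoping from the no-signalling factor really is free to act), and conversely in Case 1 that the $\mathbf{a}_{\neg i}$-constraint leaves $a_i$ unconstrained so that the parity cancellation in $\chi_I$ is available; both are immediate from the indexing conventions but deserve an explicit mention. No representation-theoretic machinery is needed—just the sparse $(+1,+1,-1,-1)$ pattern of no-signalling rows and the multiplicativity of $\chi_I$.
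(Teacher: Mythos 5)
Your proposal is correct and follows essentially the same route as the paper's proof in Appendix~\ref{a:ProofProp1}: verify orthogonality against each row of $A_{\mathrm{eq}}$ separately, using the parity cancellation $\sum_{a_i}(-1)^{a_i}=0$ for the normalisation rows and for the no-signalling rows with $i\in I$, and the $(\delta_{x_i,\mathsf{s}_1}-\delta_{x_i,\mathsf{s}_r})$ cancellation when $i\notin I$. The only cosmetic difference is that the paper treats the full-correlator case $I=[n]$ first before the general $|I|<n$ case, whereas you handle all $I$ uniformly.
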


Before proceeding with the proof, we recap the coefficient vectors encoding the no-signalling and normalisation conditions, and the uniformly-averaged marginal correlators. To aid readability, we provide some illustrations of the coefficient vectors from the $(2,2,2)$ Bell scenario in Tables~\ref{tab:NScondition222NSB},~\ref{tab:Norm_condition222NSB}, and~\ref{tab:CorrCoeffVec222}.

For any fixed $(\tilde{\mathbf{a}}_{\neg i},\tilde{\mathbf{x}}_{\neg i})\in\mathcal{O}^{n-1}\times\mathcal{S}^{n-1}$, and for all $i\in[n],r\in[m]\setminus\{1\},(\mathbf{a},\mathbf{x})\in\mathcal{O}^n\times\mathcal{S}^n$, the no-signalling equality
$p(\tilde{\mathbf{a}}_{\neg i},\tilde{\mathbf{x}}_{\neg i},x_i = \mathsf{s}_1) - p(\tilde{\mathbf{a}}_{\neg i},\tilde{\mathbf{x}}_{\neg i},x_i = \mathsf{s}_r) = 0$
can be written as $\big(\mathbf{ns}_{(\tilde{\mathbf{a}}_{\neg i},\tilde{\mathbf{x}}_{\neg i})}^{i,r}\big)^T\mathbf{p}=0$ with $\mathbf{ns}_{(\tilde{\mathbf{a}}_{\neg i},\tilde{\mathbf{x}}_{\neg i})}^{i,r}$ defined componentwise as
\begin{equation*}
\mathrm{ns}_{(\tilde{\mathbf{a}}_{\neg i},\tilde{\mathbf{x}}_{\neg i})}^{i,r}(\mathbf{a},\mathbf{x})\coloneqq\delta_{\mathbf{a}_{\neg i},\tilde{\mathbf{a}}_{\neg i}}\delta_{\mathbf{x}_{\neg i},\tilde{\mathbf{x}}_{\neg i}}(\delta_{x_i,\mathsf{s}_1} - \delta_{x_i,\mathsf{s}_r}).
\end{equation*}
For any fixed $\tilde{\mathbf{x}}\in\mathcal{S}^n$, the normalization condition $\sum_{a}p(\mathbf{a}\lvert\tilde{\mathbf{x}})=1$ can be represented as $\big(\mathbf{nrm}_{\tilde{\mathbf{x}}}\big)^T\mathbf{p}=1$, where $\mathbf{nrm}_{\tilde{\mathbf{x}}}$ is defined componentwise as
\begin{equation*}
\mathrm{nrm}_{\tilde{\mathbf{x}}}(\mathbf{a},\mathbf{x})\coloneqq\delta_{\mathbf{x},\tilde{\mathbf{x}}}.
\end{equation*}
Next, for a choice of $k$-party index set $I\coloneqq\{i_1,i_2,\ldots,i_k\}\subseteq[n]$, the uniformly-averaged marginal correlator coefficient vector $\mathbf{c}_{\tilde{\mathbf u}_I}^I$ is defined componentwise as
\begin{equation*}
c_{\tilde{\mathbf u}_I}^I(\mathbf{a},\mathbf{x})\coloneqq\frac{1}{m^{n-\abs{I}}}\chi_{I}(\mathbf a)\delta_{\mathbf{x}_I,\tilde{\mathbf u}_I},\text{ where }\chi_I(\mathbf{a})=(-1)^{\bigoplus_{j\in I}a_j}
\end{equation*}

\begin{proof}
We prove the result by showing that the dot product of any row  of $A_{\mathrm{eq}}$ and the $k$-party correlator coefficient vector is zero. Any row of $A_{\mathrm{eq}}$ is either a no-signalling condition coefficient vector or a normalisation condition coefficient vector.

\textit{Dot product of $\mathbf{c}_{\tilde{\mathbf{u}}_{I}}$ and $\mathbf{ns}_{\tilde{\mathbf{a}}_{\neg i},\tilde{\mathbf{x}}_{\neg i}}^{i,r}$ is zero}: We first prove the result for $I=[n]$, i.e., we first prove it for a full-correlator coefficient vector $\mathbf{c}_{\tilde{\mathbf{u}}}$. The following holds for any fixed choice of setting configuration $\tilde{\mathbf{u}}\in\mathcal{S}^{n}$ (characterising the particular full correlator under consideration), and any fixed choices of party $i\in[n]$, non-$i$-party settings/outcomes $(\tilde{\mathbf{a}}_{\neg i},\tilde{\mathbf{x}}_{\neg i})\in\mathcal{O}^{n-1}\times\mathcal{S}^{n-1}$, and alternate setting choice $r\in[m]\setminus\{1\}$ for party $i$ (characterising the no-signalling condition):
\begin{align}
&\big(\mathbf{ns}^{i,r}_{(\tilde{\mathbf{a}}_{\neg i},\tilde{\mathbf{x}}_{\neg i})}\big)^{T}\mathbf{c}_{\tilde{\mathbf{u}}}\label{eq:InnerProdFullCorrNSCoeffVecZero}\\
&=\sum_{\mathbf{a},\mathbf{x}}\mathrm{ns}^{i,r}_{(\tilde{\mathbf{a}}_{\neg i},\tilde{\mathbf{x}}_{\neg i})}(\mathbf{a},\mathbf{x})\mathrm{c}_{\tilde{\mathbf{u}}}(\mathbf{a},\mathbf{x})\nonumber\\
&= \sum_{\mathbf{a},\mathbf{x}}\delta_{\mathbf{a}_{\neg i},\tilde{\mathbf{a}}_{\neg i}}\delta_{\mathbf{x}_{\neg i},\tilde{\mathbf{x}}_{\neg i}}(\delta_{x_{i},\mathsf{s}_{1}}-\delta_{x_{i},\mathsf{s}_{r}})(-1)^{\bigoplus_{j=1}^{n}a_{j}}\delta_{\mathbf{x},\tilde{\mathbf{u}}}\nonumber\\
&=\sum_{\mathbf{a}_{\neg i},\mathbf{x}_{\neg i}}\sum_{a_{i},x_{i}}\delta_{\mathbf{a}_{\neg i},\tilde{\mathbf{a}}_{\neg i}}\delta_{\mathbf{x}_{\neg i},\tilde{\mathbf{x}}_{\neg i}}\delta_{\mathbf{x}_{\neg i},\tilde{\mathbf{u}}_{\neg i}}(\delta_{x_{i},\mathsf{s}_{1}}-\delta_{x_{i},\mathsf{s}_{r}})(-1)^{\bigoplus_{j=1}^{n}a_{j}}\delta_{x_{i},\tilde{u}_{i}}\nonumber
\end{align}
In the double summation following the last equality in~\eqref{eq:InnerProdFullCorrNSCoeffVecZero}, the outer sum runs over all possible combinations of outcomes $\mathbf{a}_{\neg i}$ and inputs $\mathbf{x}_{\neg i}$, and the summands are immediately seen to be zero except when $\mathbf{a}_{\neg i},\mathbf{x}_{\neg i}$ are equal to the particular outcome-input combination $(\tilde{\mathbf{a}}_{\neg i},\tilde{\mathbf{x}}_{\neg i})$ corresponding to the vector $\mathbf{ns}^{i,r}_{(\tilde{\mathbf{a}}_{\neg i},\tilde{\mathbf{x}}_{\neg i})}$. Similarly, the particular input combination $\tilde{\mathbf{u}}_{\neg i}$ corresponding to the full-correlator coefficient vector should be equal to $\tilde{\mathbf{x}}_{\neg i}$ for the summand to not trivially reduce to zero. The summation then reduces in the restricted remaining case to the following:
\begin{align}
&\sum_{a_{i},x_{i}}(\delta_{x_{i},\mathsf{s}_{1}}-\delta_{x_{i},\mathsf{s}_{r}})(-1)^{\bigoplus_{j\neq i}\tilde{a}_{j}\oplus a_{i}}\delta_{x_{i},\tilde{u}_{i}}\nonumber\\
=& \sum_{x_{i}}(\delta_{x_{i},\mathsf{s}_{1}}-\delta_{x_{i},\mathsf{s}_{r}})\delta_{x_{i},\tilde{u}_{i}}\big\{(-1)^{\bigoplus_{j\neq i}\tilde{a}_{j}\oplus 0} + (-1)^{\bigoplus_{j\neq i}\tilde{a}_{j}\oplus 1}\big\}=0.\label{e:plusandminusone}
\end{align}
The last equality follows from the fact that if $\bigoplus_{j\neq i}\tilde{a}_{j}=0$, then $\bigoplus_{j\neq i}\tilde{a}_{j}\oplus 0=0$ and $\bigoplus_{j\neq i}\tilde{a}_{j}\oplus 1=1$, which results in a cancellation of terms and hence a zero. The same holds if $\bigoplus_{j\neq i}\tilde{a}_{j}=1$.

Next, we consider the case where $|I|<n$, i.e., we consider a $k$-party correlator coefficient vector $\mathbf{c}_{\tilde{\mathbf{u}}_{I}}$ for $k<n$. The following holds for any fixed choice of $i\in[n]$, $r\in[m]\setminus\{1\}$, $\tilde{\mathbf{u}}_{I}\in\mathcal{S}^k$, and $(\tilde{\mathbf{a}}_{\neg i},\tilde{\mathbf{x}}_{\neg i})$,:
\begin{align}\label{eq:InnerProdCorrNSVec_Zero}
&\big(\mathbf{ns}^{i,r}_{(\tilde{\mathbf{a}}_{\neg i},\tilde{\mathbf{x}}_{\neg i})}\big)^T\mathbf{c}_{\tilde{\mathbf{u}}_{I}}\nonumber\\
&= \sum_{\mathbf{a},\mathbf{x}}\mathrm{ns}^{i,r}_{\tilde{\mathbf{a}}_{\neg i},\tilde{\mathbf{x}}_{\neg i}}(\mathbf{a},\mathbf{x})\mathrm{c}_{\tilde{\mathbf{u}}_{I}}(\mathbf{a},\mathbf{x})\nonumber\\
    &= \sum_{\mathbf{a}_{\neg i},\mathbf{x}_{\neg i}}\sum_{a_{i},x_{i}}\delta_{\mathbf{a}_{\neg i},\tilde{\mathbf{a}}_{\neg i}}\delta_{\mathbf{x}_{\neg i},\tilde{\mathbf{x}}_{\neg i}}\left(\delta_{x_{i},\mathsf{s}_{1}}-\delta_{x_{i},\mathsf{s}_{r}}\right)\frac{1}{m^{n-|I|}}(-1)^{\bigoplus_{j\in I}a_{j}}\delta_{\mathbf{x}_{I},\tilde{\mathbf{u}}_{I}}
\end{align}
We can analyse the above according to the following two cases. 
\begin{itemize}
\item $i \in I$: This means that the party $i$ singled out by the no-signalling condition is one of the $k$ parties of interest in the UMC coefficient vector $\mathbf{c}_{\tilde{\mathbf{u}}_{I}}$. In this case, the summands in~\eqref{eq:InnerProdCorrNSVec_Zero} immediately reduce to zero unless $\mathbf{a}_{\neg i}=\tilde{\mathbf{a}}_{\neg i}$ and $\mathbf{x}_{\neg i}=\tilde{\mathbf{x}}_{\neg i}$; when we fix this choice for the outer sum, the expression reduces to
\begin{equation}\label{eq:InnerProdCorrNSVec_Zero_1}
\frac{1}{m^{n-|I|}}\sum_{x_i}(\delta_{x_i,\mathsf{s}_1}-\delta_{x_i,\mathsf{s}_r})\delta_{x_i,\tilde{u}_i}\sum_{a_i}(-1)^{a_i \oplus \big(\bigoplus_{j\in I \setminus \{i\}}\tilde{a}_j\big)}
\end{equation}
Notice that fixing the choice $(\mathbf{a}_{\neg i},\mathbf{x}_{\neg i})=(\tilde{\mathbf{a}}_{\neg i},\tilde{\mathbf{x}}_{\neg i})$ will still result in \eqref{eq:InnerProdCorrNSVec_Zero} immediately taking the value of zero unless the corresponding entries in $\tilde{\mathbf x}_{\neg i}$ and $\tilde{\mathbf u}_{I}$ (except the $i$'th entry) are the same, which is the scenario that ultimately results in~\eqref{eq:InnerProdCorrNSVec_Zero_1}. The above expression reduces to zero once we observe the following:
\begin{equation*}
    \sum_{a_i}(-1)^{a_i \oplus \big(\bigoplus_{j\in I \setminus \{i\}}\tilde{a}_j\big)} = (+1) + (-1) = 0,
\end{equation*}
\item $i\notin I$: In this case, looking back again to the indices of the outer sum in~\eqref{eq:InnerProdCorrNSVec_Zero}, the summands continue to immediately reduce to zero unless $(\mathbf{a}_{\neg i},\mathbf{x}_{\neg i})=(\tilde{\mathbf{a}}_{\neg i},\tilde{\mathbf{x}}_{\neg i})$ hold, but now additionally the final delta function in the expression depends entirely on $\mathbf{x}_{\neg i}$ as well (as $i$ is now not an element of $I$), such that it evaluates to zero if we do not additionally have agreement of $\tilde{u}_{I}$ with $\mathbf{x}_{\neg i}$. When all the above non-zero conditions do hold, the expression in~\eqref{eq:InnerProdCorrNSVec_Zero} reduces to
\begin{equation*}
\frac{1}{m^{n-|I|}}\sum_{a_i}(-1)^{\bigoplus_{j\in I}\tilde{a}_j}\sum_{x_i}(\delta_{x_i,\mathsf{s}_1}-\delta_{x_i,\mathsf{s}_r})=0. 
\end{equation*}
\end{itemize}

\medskip

\noindent \textit{Dot product of $\mathbf{c}_{\tilde{\mathbf{u}}_{I}}$ and $\mathbf{nrm}_{\tilde{\mathbf{x}}}$ is zero}: As before, we first show that it holds for the case $I=[n]$, i.e., for the full-correlator coefficient vector $\mathbf{c}_{\tilde{\mathbf{u}}}$.
\begin{equation*}
\big(\mathbf{nrm}_{\tilde{\mathbf{x}}}\big)^T\mathbf{c}_{\tilde{\mathbf{u}}}= \sum_{\mathbf{a},\mathbf{x}}\delta_{\mathbf{x},\tilde{\mathbf{x}}}(-1)^{\bigoplus_{j=1}^n a_{j}}\delta_{\mathbf{x},\tilde{\mathbf{u}}}=\delta_{\tilde{\mathbf{x}},\tilde{\mathbf{u}}}\sum_{\mathbf{a}}(-1)^{\bigoplus_{j=1}^n a_{j}}=0.
\end{equation*}
The final equality follows from the fact that the contribution to the sum from the even-parity $n$-bit strings is $2^{n-1}$ and that from the odd-parity $n$-bit strings is $-2^{n-1}$, which cancel out.

Next, we consider the case where $|I|<n$. 
\begin{align}
\big(\mathbf{nrm}_{\tilde{\mathbf{x}}}\big)^T\mathbf{c}_{\tilde{\mathbf{u}}_{I}}
&=\sum_{\mathbf{a},\mathbf{x}}\delta_{\mathbf{x},\tilde{\mathbf{x}}}\frac{1}{m^{n-|I|}}(-1)^{\bigoplus_{j\in I}a_{j}}\delta_{\mathbf{x}_{I},\tilde{\mathbf{u}}_{I}}\label{eq:InnerProdNormVecCorrCoeffVecZero}\\
&=\sum_{\mathbf{x}_{I}}\sum_{\mathbf{x}_{I}}\sum_{\mathbf{a}}\delta_{\mathbf{x}_{I},\tilde{\mathbf{x}}_{I}}\delta_{\mathbf{x}_{I},\tilde{\mathbf{x}}_{I}}\delta_{\mathbf{x}_{I},\tilde{\mathbf{u}}_{I}}\frac{1}{m^{n-|I|}}(-1)^{\bigoplus_{j\in I}a_{j}}\nonumber \\
&=\delta_{\tilde{\mathbf{x}}_{I},\tilde{\mathbf{u}}_{I}}\sum_{\mathbf{x}_{I}}\delta_{\mathbf{x}_{I},\tilde{\mathbf{x}}_{I}}\sum_{\mathbf{a}}\frac{1}{m^{n-|I|}}(-1)^{\bigoplus_{j\in I}a_{j}}.\nonumber
\end{align}
Regardless of the value of the delta functions, the above expression will always equal zero  because
\begin{equation*}\sum_{\mathbf{a}}\frac{1}{m^{n-|I|}}(-1)^{\bigoplus_{j\in I}a_{j}}=\sum_{\mathbf{a}_{I}}(-1)^{\bigoplus_{j\in I}a_{j}}\sum_{\mathbf{a}_{I}}\frac{1}{m^{n-|I|}}=\frac{2^{n-|I|}}{m^{n-|I|}}\sum_{\mathbf{a}_{I}}(-1)^{\bigoplus_{j\in I}a_{j}}=0\nonumber.
\end{equation*}
\end{proof}

\begin{table}[H]
\centering
\renewcommand{\arraystretch}{1.1}
\caption{The no-signalling condition $p(0,0\lvert\mathsf{a},\mathsf{b}) + p(1,0\lvert\mathsf{a},\mathsf{b}) = p(0,0\lvert\mathsf{a}',\mathsf{b}) + p(1,0\lvert\mathsf{a}',\mathsf{b})$ for the $(2,2,2)$ Bell scenario is expressible as $(\mathbf{ns}_{(0,\mathsf{b})}^{1,\mathsf{a}'})^T\mathbf{p}=0$, where $\mathbf{ns}_{(0,\mathsf{b})}^{1,\mathsf{a}'}$ is shown below (the subscript corresponds to $(b_1,y_1)=(0,\mathsf{b})$ and the superscripts $1$ and $\mathsf{a}'$ correspond to party $\mathsf{A}_1$ and the setting-choice $\mathsf{a}'$; these are the identifiers of a particular no-signalling condition). We arrange the components $\mathrm{ns}_{(0,\mathsf{b})}^{1,\mathsf{a}'}(a_1,a_2,x_1,x_2)$ and $p(a_1,a_2\lvert x_1,x_2)$ in lexicographic order.}
\resizebox{\linewidth}{!}{%
\begin{tabular}{cccccccccccccccc}
\hline\hline
$00\mathsf{a}\mathsf{b}$ & $00\mathsf{a}\mathsf{b}'$ & $00\mathsf{a}'\mathsf{b}$ & $00\mathsf{a}'\mathsf{b}'$ & $01\mathsf{a}\mathsf{b}$ & $01\mathsf{a}\mathsf{b}'$ & $01\mathsf{a}'\mathsf{b}$ & $01\mathsf{a}'\mathsf{b}'$ & $10\mathsf{a}\mathsf{b}$ & $10\mathsf{a}\mathsf{b}'$ & $10\mathsf{a}'\mathsf{b}$ & $10\mathsf{a}'\mathsf{b}'$ & $11\mathsf{a}\mathsf{b}$ & $11\mathsf{a}\mathsf{b}'$ & $11\mathsf{a}'\mathsf{b}$ & $11\mathsf{a}'\mathsf{b}'$ \\
\hline
1 & 0 & -1 & 0 & 0 & 0 & 0 & 0 & 1 & 0 & -1 & 0 & 0 & 0 & 0 & 0 \\
\hline\hline
\end{tabular}}
\label{tab:NScondition222NSB}
\end{table}

\begin{table}[H]
\centering
\caption{\label{tab:Norm_condition222NSB}The normalisation condition $p(0,0\lvert\mathsf{a},\mathsf{b}) + p(0,1\lvert\mathsf{a},\mathsf{b}) + p(1,0\lvert\mathsf{a},\mathsf{b}) + p(1,1\lvert\mathsf{a},\mathsf{b}) = 1$ for the $(2,2,2)$ Bell scenario is expressible as $\mathbf{nrm}_{\mathsf{a},\mathsf{b}}^T\mathbf{p}=1$ with $\mathbf{nrm}_{\mathsf{a},\mathsf{b}}$ as shown below. The subscripts of $\mathbf{nrm}_{\mathsf{a},\mathsf{b}}$ correspond to the settings combination $(x_1,x_2)=(\mathsf{a},\mathsf{b})$ identifying this particular normalisation condition.}
\renewcommand{\arraystretch}{1.1}
\resizebox{\linewidth}{!}{%
\begin{tabular}{cccccccccccccccc}
\hline\hline
$00\mathsf{a}\mathsf{b}$ & $00\mathsf{a}\mathsf{b}'$ & $00\mathsf{a}'\mathsf{b}$ & $00\mathsf{a}'\mathsf{b}'$ & $01\mathsf{a}\mathsf{b}$ & $01\mathsf{a}\mathsf{b}'$ & $01\mathsf{a}'\mathsf{b}$ & $01\mathsf{a}'\mathsf{b}'$ & $10\mathsf{a}\mathsf{b}$ & $10\mathsf{a}\mathsf{b}'$ & $10\mathsf{a}'\mathsf{b}$ & $10\mathsf{a}'\mathsf{b}'$ & $11\mathsf{a}\mathsf{b}$ & $11\mathsf{a}\mathsf{b}'$ & $11\mathsf{a}'\mathsf{b}$ & $11\mathsf{a}'\mathsf{b}'$ \\
\hline
1 & 0 & 0 & 0 & 1 & 0 & 0 & 0 & 1 & 0 & 0 & 0 & 1 & 0 & 0 & 0 \\
\hline\hline
\end{tabular}}
\end{table}

\begin{table}[H]
\caption{\label{tab:CorrCoeffVec222}The eight UMC coefficient vectors for the $(2,2,2)$ Bell scenario: $\mathbf{c}_{\mathsf{a}}^{\{1\}}$, $\mathbf{c}_{\mathsf{a}'}^{\{1\}}$,$\mathbf{c}_{\mathsf{b}}^{\{2\}}$, $\mathbf{c}_{\mathsf{b}'}^{\{2\}}$, $\mathbf{c}_{\mathsf{a},\mathsf{b}}^{\{1,2\}}$, $\mathbf{c}_{\mathsf{a},\mathsf{b}'}^{\{1,2\}}$, $\mathbf{c}_{\mathsf{a}',\mathsf{b}}^{\{1,2\}}$ and $\mathbf{c}_{\mathsf{a}',\mathsf{b}'}^{\{1,2\}}$ defining the UMC terms $\bar{C}_{\tilde{\mathbf u}_I}^I$ as $(\mathbf{c}_{\tilde{\mathbf u}_I}^I)^T\mathbf{p}$, where $\varnothing\neq I\subseteq\{1,2\}$.}
\centering
\renewcommand{\arraystretch}{1.16}
\resizebox{\linewidth}{!}{%
\begin{tabular}{ccccccccccccccccc}
\hline\hline
 & $00\mathsf{a}\mathsf{b}$ & $00\mathsf{a}\mathsf{b}'$ & $00\mathsf{a}'\mathsf{b}$ & $00\mathsf{a}'\mathsf{b}'$ & $01\mathsf{a}\mathsf{b}$ & $01\mathsf{a}\mathsf{b}'$ & $01\mathsf{a}'\mathsf{b}$ & $01\mathsf{a}'\mathsf{b}'$ & $10\mathsf{a}\mathsf{b}$ & $10\mathsf{a}\mathsf{b}'$ & $10\mathsf{a}'\mathsf{b}$ & $10\mathsf{a}'\mathsf{b}'$ & $11\mathsf{a}\mathsf{b}$ & $11\mathsf{a}\mathsf{b}'$ & $11\mathsf{a}'\mathsf{b}$ & $11\mathsf{a}'\mathsf{b}'$ \\
\hline
$\mathbf{c}_{\mathsf{a}}^{\{1\}}$ & $\frac{1}{2}$ & $\frac{1}{2}$ & $0$ & $0$ & $\frac{1}{2}$ & $\frac{1}{2}$ & $0$ & $0$ & $-\frac{1}{2}$ & $-\frac{1}{2}$ & $0$ & $0$ & $-\frac{1}{2}$ & $-\frac{1}{2}$ & 0 & 0 \\
$\mathbf{c}_{\mathsf{a}'}^{\{1\}}$ & $0$ & $0$ & $\frac{1}{2}$ & $\frac{1}{2}$ & $0$ & $0$ & $\frac{1}{2}$ & $\frac{1}{2}$ & $0$ & $0$ & $-\frac{1}{2}$ & $-\frac{1}{2}$ & 0 & 0 & $-\frac{1}{2}$ & $-\frac{1}{2}$ \\
$\mathbf{c}_{\mathsf{b}}^{\{2\}}$ & $\frac{1}{2}$ & $0$ & $\frac{1}{2}$ & $0$ & $-\frac{1}{2}$ & $0$ & $-\frac{1}{2}$ & $0$ & $\frac{1}{2}$ & $0$ & $\frac{1}{2}$ & $0$ & $-\frac{1}{2}$ & $0$ & $-\frac{1}{2}$ & $0$ \\
$\mathbf{c}_{\mathsf{b}'}^{\{2\}}$ & $0$ & $\frac{1}{2}$ & $0$ & $\frac{1}{2}$ & $0$ & $-\frac{1}{2}$ & $0$ & $-\frac{1}{2}$ & $0$ & $\frac{1}{2}$ & $0$ & $\frac{1}{2}$ & $0$ & $-\frac{1}{2}$ & $0$ & $-\frac{1}{2}$ \\
$\mathbf{c}_{\mathsf{a},\mathsf{b}}^{\{1,2\}}$ & $1$ & $0$ & $0$ & $0$ & $-1$ & $0$ & $0$ & $0$ & $-1$ & $0$ & $0$ & $0$ & $1$ & $0$ & $0$ & $0$ \\
$\mathbf{c}_{\mathsf{a},\mathsf{b}'}^{\{1,2\}}$ & $0$ & $1$ & $0$ & $0$ & $0$ & $-1$ & $0$ & $0$ & $0$ & $-1$ & $0$ & $0$ & $0$ & $1$ & $0$ & $0$ \\
$\mathbf{c}_{\mathsf{a}',\mathsf{b}}^{\{1,2\}}$ & $0$ & $0$ & $1$ & $0$ & $0$ & $0$ & $-1$ & $0$ & $0$ & $0$ & $-1$ & $0$ & $0$ & $0$ & $1$ & $0$ \\
$\mathbf{c}_{\mathsf{a}',\mathsf{b}'}^{\{1,2\}}$ & $0$ & $0$ & $0$ & $1$ & $0$ & $0$ & $0$ & $-1$ & $0$ & $0$ & $0$ & $-1$ & $0$ & $0$ & $0$ & $1$ \\
\hline\hline
\end{tabular}}
\end{table}

\section{Projection in the \texorpdfstring{$(2,2,2)$}{222} Bell scenario}

The matrix representations of the linear maps $T_i$ in the projection $\widehat{\mathbf p} = T_3 T_2 T_1\mathbf{f}$ depend on the chosen ordering of the components of the empirical behaviour $\mathbf f$, the intermediate vectors, and the final $L^2$-estimate $\widehat{\mathbf p}$. By arranging these components in a suitable order, one can expose additional algebraic structure in the maps $T_i$, for example, block structure or Kronecker structure, and thereby express the $T_i$'s more compactly. Representing the individual conditional probabilities $f(a_1,a_2\lvert x_1,x_2)$ as $f_{a_{1}a_{2}\lvert x_{1}x_{2}}$, we choose the following ordering for the components of $\mathbf f$, which groups outcomes by setting configuration. The input choices $x_1,x_2$ and the possible outcomes $a_1,a_2$ for each input choice are from the set $\{0,1\}$.
\begin{equation*}
\mathbf{f}=\begin{pmatrix}
\mathbf{f}_{00} & \mathbf{f}_{01} & \mathbf{f}_{10} & \mathbf{f}_{11} 
\end{pmatrix}^T,\text{ where } \mathbf{f}_{x_1 x_2} = \begin{pmatrix}
f_{00\lvert x_1 x_2} & f_{01\lvert x_1 x_2} & f_{10\lvert x_1 x_2} & f_{11\lvert x_1 x_2}  
\end{pmatrix}^T.
\end{equation*}
(1) \textit{From probabilities to parity correlators, $\mathbf{p}\mapsto C_{\mathbf x}^I$}: For each input pair $(x_1,x_2)$ and subset $I\subseteq\{1,2\}$ (including $I=\varnothing$), the parity correlators $C_{x_1 x_2}^{I}$ are as defined in~\eqref{eq:f_to_Cx}. We collect the $C_{x_1 x_2}^I$ terms into the vector $\mathbf{t}_1$ in the following order: We arrange $\mathbf{t}_1$ as four $4\times 1$ blocks, each block corresponding to an input pair $(x_1,x_2)$, and within each block the correlators are arranged in the order $\{\varnothing,\{1\},\{2\},\{1,2\}\}$.

\begin{equation}\label{eq:ordering_Cx}
\mathbf{t}_1 = \begin{pmatrix}
\mathbf{C}_{00}^I & \mathbf{C}_{01}^I & \mathbf{C}_{10}^I & \mathbf{C}_{11}^I
\end{pmatrix}^{T},\text{ where }\mathbf{C}_{x_1 x_2}^I=\begin{pmatrix}
1 & C_{x_1 x_2}^{\{1\}} & C_{x_1 x_2}^{\{2\}} & C_{x_1 x_2}^{\{1,2\}}    \end{pmatrix}^T.
\end{equation}
The matrix representation of the linear map $T_1$ is the following:

\begin{equation}\label{eq:T_1_MatrixRep}
T_1 = \begin{pmatrix}
1 & 1 & 1 & 1 & 0 & 0 & 0 & 0 & 0 & 0 & 0 & 0 & 0 & 0 & 0 & 0 \\
1 & 1 & -1 & -1 & 0 & 0 & 0 & 0 & 0 & 0 & 0 & 0 & 0 & 0 & 0 & 0 \\
1 & -1 & 1 & -1 & 0 & 0 & 0 & 0 & 0 & 0 & 0 & 0 & 0 & 0 & 0 & 0 \\
1 & -1 & -1 & 1 & 0 & 0 & 0 & 0 & 0 & 0 & 0 & 0 & 0 & 0 & 0 & 0 \\
0 & 0 & 0 & 0 & 1 & 1 & 1 & 1 & 0 & 0 & 0 & 0 & 0 & 0 & 0 & 0 \\
0 & 0 & 0 & 0 & 1 & 1 & -1 & -1 & 0 & 0 & 0 & 0 & 0 & 0 & 0 & 0 \\
0 & 0 & 0 & 0 & 1 & -1 & 1 & -1 & 0 & 0 & 0 & 0 & 0 & 0 & 0 & 0 \\
0 & 0 & 0 & 0 & 1 & -1 & -1 & 1 & 0 & 0 & 0 & 0 & 0 & 0 & 0 & 0 \\
0 & 0 & 0 & 0 & 0 & 0 & 0 & 0 & 1 & 1 & 1 & 1 & 0 & 0 & 0 & 0 \\
0 & 0 & 0 & 0 & 0 & 0 & 0 & 0 & 1 & 1 & -1 & -1 & 0 & 0 & 0 & 0 \\
0 & 0 & 0 & 0 & 0 & 0 & 0 & 0 & 1 & -1 & 1 & -1 & 0 & 0 & 0 & 0 \\
0 & 0 & 0 & 0 & 0 & 0 & 0 & 0 & 1 & -1 & -1 & 1 & 0 & 0 & 0 & 0 \\
0 & 0 & 0 & 0 & 0 & 0 & 0 & 0 & 0 & 0 & 0 & 0 & 1 & 1 & 1 & 1 \\
0 & 0 & 0 & 0 & 0 & 0 & 0 & 0 & 0 & 0 & 0 & 0 & 1 & 1 & -1 & -1 \\ 
0 & 0 & 0 & 0 & 0 & 0 & 0 & 0 & 0 & 0 & 0 & 0 & 1 & -1 & 1 & -1 \\
0 & 0 & 0 & 0 & 0 & 0 & 0 & 0 & 0 & 0 & 0 & 0 & 1 & -1 & -1 & 1
\end{pmatrix}.
\end{equation}
With this particular ordering for $\mathbf f$ and $\mathbf{t}_1$, the map $T_1$ has a block structure, where the Walsh-Hadamard matrix $H_4\in\mathbb{R}^{4\times 4}$ along the diagonal is expressible as $H_{4}=SH^{\otimes 2}$ for $H=\begin{pmatrix}
1 & 1 \\ 1 & -1\end{pmatrix}$ and the permutation matrix $S\in\mathbb{R}^{4\times 4}$ associated with the transposition $(2\,\,3)$ that swaps rows $2$ and $3$. $T_1$ can then be expressed as
\begin{equation*}
    T_1 = I_4\otimes H_4,
\end{equation*}
where $I_4\in\mathbb{R}^{4\times 4}$ is the identity matrix. Here $I_4$ indexes the four input pairs $(x_1,x_2)$, while $H_4$ converts the four outcomes for each pair into $\begin{pmatrix}
\bar{C}^{\varnothing} & \bar{C}^{\{1\}} & \bar{C}^{\{2\}} & \bar{C}^{\{1,2\}}\end{pmatrix}^T$ in the order of~\eqref{eq:ordering_Cx}.

(2) \textit{Uniformly averaging over non-$k$-party settings, $C_{\mathbf x}^I\mapsto\bar{C}_{\mathbf{x}_I}^I$}: We choose the following order for the vector $\mathbf{t}_2$ collecting the UMC terms $\bar{C}_{\mathbf{x}_I}^I$ for all $I\subseteq\{1,2\}$.
\begin{equation}\label{eq:ordering_CxI}
\mathbf{t}_2 = \begin{pmatrix}1 & \bar{C}_{0}^{\{1\}} & \bar{C}_{1}^{\{1\}} & \bar{C}_{0}^{\{2\}} & \bar{C}_{1}^{\{2\}} & \bar{C}_{00}^{\{1,2\}} & \bar{C}_{01}^{\{1,2\}} & \bar{C}_{10}^{\{1,2\}} & \bar{C}_{11}^{\{1,2\}}
\end{pmatrix}^{T}.
\end{equation}
The linear map, shown in~\eqref{eq:Cx_to_CxI}, is then represented as
\begin{equation}\label{eq:T_2_MatrixRep}
T_2 = \begin{pmatrix}
\frac{1}{4} & 0 & 0 & 0 & \frac{1}{4} & 0 & 0 & 0 & \frac{1}{4} & 0 & 0 & 0 & \frac{1}{4} & 0 & 0 & 0 \\
 0 & \frac{1}{2} & 0 & 0 & 0 & \frac{1}{2} & 0 & 0 & 0 & 0 & 0 & 0 & 0 & 0 & 0 & 0 \\
0 & 0 & 0 & 0 & 0 & 0 & 0 & 0 & 0 & \frac{1}{2} & 0 & 0 & 0 & \frac{1}{2} & 0 & 0 \\
0 & 0 & \frac{1}{2} & 0 & 0 & 0 & 0 & 0 & 0 & 0 & \frac{1}{2} & 0 & 0 & 0 & 0 & 0 \\
0 & 0 & 0 & 0 & 0 & 0 & \frac{1}{2} & 0 & 0 & 0 & 0 & 0 & 0 & 0 & \frac{1}{2} & 0 \\
0 & 0 & 0 & 1 & 0 & 0 & 0 & 0 & 0 & 0 & 0 & 0 & 0 & 0 & 0 & 0 \\
0 & 0 & 0 & 0 & 0 & 0 & 0 & 1 & 0 & 0 & 0 & 0 & 0 & 0 & 0 & 0 \\
0 & 0 & 0 & 0 & 0 & 0 & 0 & 0 & 0 & 0 & 0 & 1 & 0 & 0 & 0 & 0 \\
0 & 0 & 0 & 0 & 0 & 0 & 0 & 0 & 0 & 0 & 0 & 0 & 0 & 0 & 0 & 1
\end{pmatrix}.
\end{equation}
$T_2$ carries out four types of operations: (1) a $1/4$ average of the four ``1'' entries (one per $(x_1,x_2)$), (2) a $1/2$ average over Bob's input for $\bar{C}_{x_1 x_2}^{\{1\}}$, (3) a $1/2$ average over Alice's input for $\bar{C}_{x_1 x_2}^{\{2\}}$, and (4) identity on the four two-party (full) correlators $\bar{C}_{x_1 x_2}^{\{1,2\}}$. To make the structure of $T_2$ in terms of these operations explicit, it is convenient to re-arrange the components of $\mathbf{t}_1$ so that across all $(x_1,x_2)$, the components are grouped by the type $I\in\{\varnothing,\{1\},\{2\},\{1,2\}\}$. Let $P\in\mathbb{R}^{16\times 16}$ be the permutation matrix that when applied to $\mathbf{t}_1$ results in the re-ordering $[\varnothing\,|\,\{1\}\,|\,\{2\}\,|\,\{1,2\}]$ as shown below:
{\small\begin{equation*}
P\mathbf{t}_1 = \begin{pmatrix}
1 & 1 & 1 & 1 & C_{00}^{\{1\}} & C_{01}^{\{1\}} & C_{10}^{\{1\}} & C_{11}^{\{1\}} & C_{00}^{\{2\}} & C_{01}^{\{2\}} & C_{10}^{\{2\}} & C_{11}^{\{2\}} & C_{00}^{\{1,2\}} & C_{01}^{\{1,2\}} & C_{10}^{\{1,2\}} & C_{11}^{\{1,2\}}
\end{pmatrix}^{T}
\end{equation*}}
Denote the row vectors $\mathbf{r}_{\mathrm{avg}}=\begin{pmatrix}\frac{1}{2} & \frac{1}{2}\end{pmatrix}$ and $\mathbf{1}_4=\begin{pmatrix}
1 & 1 & 1 & 1\end{pmatrix}$. The matrix shown in~\eqref{eq:T_2_MatrixRep} can be expressed as shown below:
\begin{equation*}
T_2 = \left[\frac{1}{4}\mathbf{1}_4\oplus(I_2\otimes\mathbf{r}_{\mathrm{avg}})\oplus(\mathbf{r}_{\mathrm{avg}}\otimes I_2)\oplus I_4\right]P
\end{equation*}
This shows $T_2$ as a direct sum of simple averaging maps acting on the four grouped blocks.

(3) \textit{Reconstruction, $\bar{C}_{\mathbf{x}_I}^I\mapsto\widehat{\mathbf p}$}: Finally, we map $\mathbf{t}_2$ back to a behaviour $\widehat{\mathbf p}\in\mathbb{R}^{16}$ using the same component ordering as $\mathbf f$. The matrix representation of $T_3$, shown in~\eqref{eq:CxI_to_phat}, is 
\begin{equation}
T_3 = \frac{1}{4}\begin{pmatrix}
1 & 1 & 0 & 1 & 0 & 1 & 0 & 0 & 0 \\
1 & 1 & 0 & -1 & 0 & -1 & 0 & 0 & 0 \\
1 & -1 & 0 & 1 & 0 & -1 & 0 & 0 & 0 \\
1 & -1 & 0 & -1 & 0 & 1 & 0 & 0 & 0 \\
1 & 1 & 0 & 0 & 1 & 0 & 1 & 0 & 0\\
1 & 1 & 0 & 0 & -1 & 0 & -1 & 0 & 0\\
1 & -1 & 0 & 0 & 1 & 0 & -1 & 0 & 0\\
1 & -1 & 0 & 0 & -1 & 0 & 1 & 0 & 0\\
1 & 0 & 1 & 1 & 0 & 0 & 0 & 1 & 0\\
1 & 0 & 1 & -1 & 0 & 0 & 0 & -1 & 0\\
1 & 0 & -1 & 1 & 0 & 0 & 0 & -1 & 0\\
1 & 0 & -1 & -1 & 0 & 0 & 0 & 1 & 0\\
1 & 0 & 1 & 0 & 1 & 0 & 0 & 0 & 1\\
1 & 0 & 1 & 0 & -1 & 0 & 0 & 0 & -1\\
1 & 0 & -1 & 0 & 1 & 0 & 0 & 0 & -1\\
1 & 0 & -1 & 0 & -1 & 0 & 0 & 0 & 1\\
\end{pmatrix}    
\end{equation}
The columns of $T_3$ correspond to the correlators $(\bar{C}^{\varnothing},\bar{C}_{0}^{\{1\}},\bar{C}_{1}^{\{1\}},\bar{C}_{0}^{\{2\}},\bar{C}_{1}^{\{2\}},\bar{C}_{00}^{\{1,2\}},\bar{C}_{01}^{\{1,2\}},\bar{C}_{10}^{\{1,2\}},\bar{C}_{11}^{\{1,2\}})$ in the order as listed. The matrix can be seen as four matrices $\tilde{R}_{x_1 x_2}\in\mathbb{R}^{4\times 9}$, one for each input pair $(x_1,x_2)$, stacked vertically in the descending order $(\tilde{R}_{00},\tilde{R}_{01},\tilde{R}_{10},\tilde{R}_{11})$. The matrix $\tilde{R}_{x_1 x_2}$ can be expressed as
\begin{equation*}
\tilde{R}_{x_1 x_2} = H_{4}R_{x_1 x_2},\text{ where }R_{x_1 x_2} = \begin{pmatrix}\mathbf{e}_{1}^T \\ \mathbf{e}_{2+x_1}^T \\ \mathbf{e}_{4+x_2}^T \\ \mathbf{e}_{6+2x_1 + x_2}^T
\end{pmatrix} \in \mathbb{R}^{4\times 9}\text{ for }x_1,x_2\in\{0,1\}.
\end{equation*}
The vectors $\{\mathbf{e}_{j}\}_{j=1}^{9}\in\mathbb{R}^9$ are the standard basis vectors. The matrix $T_3$ is then expressible as
\begin{equation*}
T_3 = \frac{1}{4}\begin{pmatrix}\tilde{R}_{00} \\ \tilde{R}_{01} \\ \tilde{R}_{10} \\ \tilde{R}_{11}\end{pmatrix} =\frac{1}{4}(I_4\otimes H_4)\begin{pmatrix}R_{00}\\R_{01}\\R_{10}\\R_{11}\end{pmatrix}.
\end{equation*}

\printbibliography

\end{document}